\documentclass[a4paper]{article}

\usepackage{fullpage}

\usepackage{color}
\usepackage{graphicx}
\usepackage{amsmath}
\usepackage{amssymb}
\usepackage{amsthm}
\usepackage[noend]{algpseudocode}
\usepackage{algorithm}
\usepackage{tikz}
\usepackage{enumitem}
\usepackage{booktabs}
\usepackage{tabularx}
\usepackage{array}
\usepackage[hidelinks]{hyperref}
\usepackage{cleveref}

\theoremstyle{plain}
\newtheorem{theorem}{Theorem}
\newtheorem{lemma}[theorem]{Lemma}
\newtheorem{proposition}[theorem]{Proposition}
\newtheorem{corollary}[theorem]{Corollary}

\theoremstyle{definition}
\newtheorem{definition}[theorem]{Definition}
\newtheorem{assumption}[theorem]{Assumption}

\theoremstyle{remark}
\newtheorem{remark}[theorem]{Remark}

\let\leq\leqslant
\def\eps{\varepsilon}
\newcommand{\IR}{\mathbb{R}}

\newcommand{\IP}{\mathbb{P}}
\newcommand{\NVD}{\operatorname{NVD}}
\newcommand{\FVD}{\operatorname{FVD}}

\newcommand{\comp}{\operatorname{comp}}

\title{Quadratic Complexity of Voronoi Diagrams in $\mathbb{R}^3$ for Lines in a Single Ruling of a Regulus
\thanks{
This work was supported by National Research Foundation of Korea (NRF) grants funded by the Korea government (MSIT) (No. RS-2026-25471649 and No. RS-2024-00414849).
}}

\author{Eunku Park\\
Department of Liberal arts and Sciences\\
DGIST, Republic of Korea\\
\texttt{parkeun9@dgist.ac.kr}
}

\begin{document}
\maketitle

\begin{abstract}
We study nearest and farthest Voronoi diagrams of lines in $\IR^3$ under the Euclidean metric when all $n$ lines belong to one ruling of a smooth doubly ruled real quadric. For arbitrary line sites, the combinatorial complexity of the nearest Voronoi diagram is known only to lie between $\Omega(n^2)$ and $O(n^{3+\eps})$. Under general-position assumptions, we prove that both diagrams in the ruling class have at most $4n(n-3)$ vertices and $O(n^2)$ total combinatorial complexity. Conversely, for every $n \ge 4$, one ruling of a fixed non-rotational one-sheeted hyperboloid contains a general-position set of $n$ lines with at least $(n-2)(n-3)/2$ distinct regular nearest vertices, where regular means that exactly four lines support the vertex and their three defining bisectors meet transversely. Thus the worst-case complexity of the nearest Voronoi diagram in this class is $\Theta(n^2)$, while the farthest diagram has $\Theta(n^2)$ complexity for every general-position input, since it has exactly $n(n-1)$ three-dimensional cells. Under the Pl\"ucker embedding, the
ruling is a conic, and the condition for a line to be tangent to a Euclidean sphere restricts to a binary quartic. At a regular vertex, the four supporting parameters exhaust its roots, and sign alternation forces two arcs of the parameter circle to be site-free. This leaves only $n(n-3)/2$ possible cyclic support types, while B\'ezout's theorem bounds the number of centers for each type by eight. The same reduction yields an exact $O(n^2)$-time algorithm that, after cyclically sorting the site parameters, enumerates all finite nearest and farthest vertices as constant-degree real univariate representations.

\end{abstract}

\section{Introduction}\label{sec:introduction}

Voronoi diagrams are fundamental space-partitioning structures in computational geometry; see Aurenhammer's survey~\cite{Aurenhammer1991}. Given a finite set of lines $L$ in $\IR^3$, the nearest and farthest Voronoi diagrams subdivide the ambient space according to the lines that minimize and maximize, respectively, the Euclidean distance from a point. Their combinatorial complexity is the total number of vertices, edges, two-dimensional faces, and three-dimensional cells.

The combinatorial complexity of the nearest Voronoi diagram of arbitrary lines in $\IR^3$ is a long-standing open problem. Aronov constructed instances of complexity $\Omega(n^2)$~\cite{Aronov2002}, whereas the best known upper bound remains $O(n^{3+\eps})$ for every fixed $\eps>0$~\cite{Sharir1994,AgarwalAronovSharir1997}. The latter follows from general lower-envelope theory and leaves a gap between quadratic and near-cubic complexity. Papadopoulou and Wang~\cite{PapadopoulouWang2026} recently classified the nearest and farthest Voronoi diagrams of four lines in general position, but the complexity problem for arbitrary $n$ remains unresolved.

We study a geometrically natural restricted class in which all $n$ lines belong to one ruling of a smooth doubly ruled real quadric. Such a surface contains two one-parameter families of lines, and through each of its points passes one line from each family. Standard examples are the one-sheeted hyperboloid and the hyperbolic paraboloid shown in Figure~\ref{fig:doubly-ruled-quadrics}. Lines in the same ruling are pairwise skew and their directions vary continuously, so this is not a fixed-orientation family. Explicit parametrizations of both rulings of these model quadrics, together with verifications of their incidence properties, are given in Appendix~\ref{app:explicit-rulings}. We assume the general-position conditions stated in Assumption~\ref{ass:gp}.

\begin{figure}
	\centering
	\includegraphics[width=\linewidth]{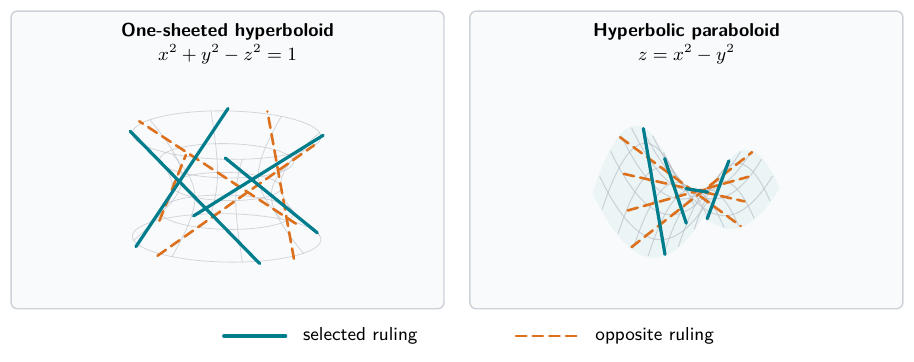}
	\caption{Two standard examples of smooth doubly ruled real quadric surfaces. Several generators from each ruling are shown: the solid teal lines belong to the ruling from which the sites are selected, and the dashed orange lines belong to the opposite ruling. Each ruling is a continuous one-parameter family of pairwise skew lines.}
	\label{fig:doubly-ruled-quadrics}
\end{figure}

Our proof exploits the representation of lines in projective line space. Under the Pl\"ucker embedding, a ruling becomes a plane conic in the Klein quadric, while the condition for a line to be tangent to a fixed Euclidean sphere restricts to a binary quartic on the parameter circle of the ruling. At a regular Voronoi vertex, the four supporting parameters exhaust the roots of this quartic, and sign alternation forces two arcs between consecutive roots to contain no other site parameters. This converts the geometric support condition into a cyclic combinatorial problem with only quadratically many candidates. The projective representation is used only to parametrize the ruling and control the algebraic degree; the distance function and all Voronoi diagrams remain Euclidean.

\subparagraph{Main result.}
Our main result determines the exact asymptotic order of the worst-case combinatorial complexity for this class.

\begin{theorem}\label{thm:main}
Let $Q \subset \IP^{3}(\IR)$ be a smooth quadric surface ruled by two families of real projective lines, let $\Gamma\simeq\IP^{1}(\IR)$ be one of these rulings, and let $L=\{\ell_1,\ldots,\ell_n\}\subset\Gamma$ be $n\ge 4$ distinct affine lines satisfying Assumption~\ref{ass:gp}. Then the following hold.
\begin{enumerate}
	\item The nearest Voronoi diagram has at most $V_N\le 4n(n-3)$ vertices and total combinatorial complexity
\[
	\comp(\NVD(L)) = 4V_N+6n-5 = O(n^{2}).
\]
	\item The farthest Voronoi diagram has at most $V_F\le 4n(n-3)$ vertices and total combinatorial complexity
\[
	\comp(\FVD(L)) = 4V_F+4n^{2}-4n-5 = O(n^{2}).
\]
	\item There is a fixed smooth regulus $Q_*:x^{2}/4+y^{2}-z^{2}=1$ such that, for every $n\ge 4$, one ruling of $Q_*$ contains a general-position input satisfying
\[
	V_N\ge \frac{(n-2)(n-3)}{2}.
\]
\end{enumerate}
Consequently, the worst-case combinatorial complexity of the nearest Voronoi diagram over general-position sets of $n$ lines in one ruling of a smooth regulus is $\Theta(n^{2})$. The farthest Voronoi diagram has $\Theta(n^{2})$ total combinatorial complexity for every such input, since it has $n^{2}-n$ three-dimensional cells~\cite[Theorem~5.12]{BarequetPapadopoulouSuderland2024} already provide the lower bound.
\end{theorem}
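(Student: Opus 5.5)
The proof has three parts: an upper bound on vertices via the Plücker/quartic reduction, a conversion of vertex counts into total complexity, and a lower-bound construction.

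\emph{Upper bound.} First I parametrize $\Gamma$ by a conic in the Klein quadric. This gives a quadratic map $t\mapsto\ell(t)$ from $\IP^1(\IR)$, the parameter circle. Fix a candidate vertex $p$ with radius $r$. The condition that $\ell(t)$ is tangent to the sphere $S(p,r)$, i.e.\ $d(p,\ell(t))^2=r^2$, becomes a binary quartic $F_{p,r}(t)$ after clearing the denominator $|\mathrm{dir}\,\ell(t)|^2$. Both the squared-distance numerator and the squared direction norm are quadratic in the Plücker coordinates, hence quartic in $t$.

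At a regular vertex the four supporting parameters $t_1<t_2<t_3<t_4$ (cyclically) are roots of $F$. By degree they exhaust the roots, all simple. Transversality should give the simplicity. So the sign of $F$, i.e.\ the sign of $d^2-r^2$, alternates on the four arcs. For a nearest vertex every other site must have $d>r$, so all other site parameters lie on the two arcs of one sign and the two arcs of the other sign are site-free. For a farthest vertex the same holds with the roles of the signs swapped.

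Hence two opposite pairs $(t_1,t_2)$ and $(t_3,t_4)$ are cyclically consecutive among the sorted site parameters. A support type is thus an unordered pair of disjoint consecutive edges in the $n$-cycle, and there are $n(n-3)/2$ such pairs. For each type, the center is determined by three bisector equations $d(p,\ell_1)^2=d(p,\ell_j)^2$, $j=2,3,4$. Each is a quadric in $p$, so Bézout gives at most $8$ isolated common solutions. This yields $V\le 4n(n-3)$ for both diagrams.

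\emph{Total complexity.} Under Assumption~\ref{ass:gp} every vertex has degree $4$ in the cell complex. So $E$ and $F$ can be expressed through $V$ and the number of cells $C$ by Euler's relation for the one-point compactification, taking care with unbounded edges. For the nearest diagram, $C=n$ and each cell is connected (star-shaped around its site), which gives $4V_N+6n-5$. For the farthest diagram I cite $C=n^2-n$ from~\cite[Theorem~5.12]{BarequetPapadopoulouSuderland2024} and obtain $4V_F+4n^2-4n-5$. This step is bookkeeping, but it requires showing the unbounded structure contributes exactly the stated constants. I would verify that via the diagram restricted to a large sphere, i.e.\ the Gaussian-map structure of directions.

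\emph{Lower bound.} On $Q_*$ I take $n$ lines with parameters clustered so that, for every pair of disjoint consecutive edges with a fixed orientation (there are roughly $(n-2)(n-3)/2$ of them), some sphere is tangent to the four lines with the remaining lines outside it. The plan is to choose parameters on a short arc where the ruling is well approximated by a local model, and to show via an implicit-function argument that each of these quadruples admits a real empty-sphere center. Non-rotationality of $Q_*$ breaks the symmetries that would otherwise cause degeneracies. Regularity then follows from a nonvanishing Jacobian that holds generically, so a small perturbation secures Assumption~\ref{ass:gp}.

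I expect this construction to be the main obstacle: proving existence of a real tangent sphere with the correct sign pattern for quadratically many quadruples simultaneously. I would first try to exhibit the sphere explicitly in a limiting configuration and then perturb.
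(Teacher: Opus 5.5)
\paragraph{Upper bound and complexity conversion.}
Your upper-bound argument is essentially the paper's: the ruling is a Pl\"ucker conic, and the restricted tangency condition is a binary quartic whose four support roots are exhaustive and simple. Sign alternation then leaves two arcs site-free, there are $n(n-3)/2$ vertex-disjoint gap pairs, and B\'ezout allows at most eight centers per quadruple. Two small points:
\begin{itemize}
\item You need Assumption~\ref{ass:gp}(5) to know the quartic is not identically zero.
\item Simplicity of the roots comes from having four distinct roots of a nonzero quartic, not from transversality.
\end{itemize}
For total complexity, the paper does not derive anything; it cites \cite[Theorem~9]{PapadopoulouWang2026} for $E$, $F$, $C$ in terms of $V$. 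Your Euler-characteristic route is viable in principle, but it needs two facts:
\begin{itemize}
\item every face and cell contributes to $V-E+F-C=-1$ as an open cell would;
\item the unbounded edge ends number exactly $4n-4$ (nearest) and $2n^2-2n-4$ (farthest). This is what $E_N=2V_N+2n-2$ and $E_F=2V_F+n^2-n-2$ amount to once every vertex has degree four.
\end{itemize}
These facts are the real content of the cited theorem. As written, this step is deferred rather than proved.

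\paragraph{The gap: part~3.}
The genuine gap is the lower bound. Your plan gives no mechanism that guarantees, for all $(n-2)(n-3)/2$ quadruples at once, a \emph{real} sphere with $R>0$ tangent to $\ell_{u_i},\ell_{u_{i+1}},\ell_{u_j},\ell_{u_{j+1}}$. That sphere's contact quartic must also have positive leading coefficient $p_4$. The four roots already fix the quartic up to the scalar $p_4$, and the sign of $p_4$ decides whether the two gaps or the two complementary arcs are the negative ones. A quadruple-by-quadruple implicit-function argument does not control this, and you do not identify the limiting object.

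The missing idea is a sphere with fourth-order contact. For $Q_*$, the choice $c_*=(5/2,0,0)$, $R_*=1/4$ gives contact polynomial $40u^4$. The normalized coefficient map $\Psi=(p_0/p_4,\dots,p_3/p_4)$ has $\det D\Psi(c_*,R_*)=1/250\neq0$. By the inverse function theorem, every monic quartic near $u^4$ is the normalized contact polynomial of a unique nearby sphere with $R>0$ and $p_4>0$. Now cluster $u_1<\cdots<u_n$ near $0$ and prescribe $(u-u_i)(u-u_{i+1})(u-u_j)(u-u_{j+1})$. Through the sign identity, this gives tangency and emptiness simultaneously, with no separate emptiness argument.

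You would still need three further ingredients:
\begin{itemize}
\item Regularity: the tangency Jacobian factors as a diagonal matrix, times a Vandermonde matrix, times $D\Psi$.
\item Distinct centers: $\Psi$ is locally injective, and two empty spheres with the same center must have the same radius.
\item Assumption~\ref{ass:gp}(5): this is a property of the ruling, so perturbing the site parameters cannot secure it. On a rotational hyperboloid the throat sphere is tangent to every generator. For $Q_*$ one must check directly that no real sphere is tangent to all lines of the chosen ruling, and this is the actual reason non-rotationality matters.
\end{itemize}
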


The structural proof also gives a quadratic vertex-enumeration algorithm.

\begin{theorem}\label{thm:intro-algorithm}
Given a reduced parametrization of the ruling, all finite nearest and farthest Voronoi vertices can be enumerated, after cyclically sorting the site parameters, by solving $n(n-3)/2$ systems, each consisting of three quadratic equations in three variables, together with $O(n^{2})$ additional arithmetic and sign operations. Under the real-RAM convention of Section~\ref{sec:vertex-enumeration}, this yields an $O(n^2)$-time vertex-enumeration algorithm. If only the line sites are given, with the promise that they belong to the same ruling, the ruling conic and a reduced parametrization of it can be recovered using $O(n)$ fixed-degree algebraic operations.
\end{theorem}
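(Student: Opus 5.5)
We will follow the structural reduction sketched in the abstract and turn each step into an algorithmic primitive. Fix a reduced parametrization $t\mapsto \ell(t)$, $t\in\IP^1(\IR)$, of the ruling $\Gamma$ by Pl\"ucker coordinates that are quadratic forms in $(s:t)$; this is the conic in the Klein quadric.

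\textbf{Tangency quartic and vertex systems.} For a sphere with center $c$ and squared radius $r^2$, the tangency condition $d(c,\ell(t))^2=r^2$ is first written, after clearing the denominator $|\mathrm{dir}(\ell(t))|^2$, as a binary quartic $F_{c,r}(s,t)$. Its coefficients depend polynomially on $(c,r^2)$, and the dependence on $r^2$ is linear. A finite vertex $c$ of either diagram is equidistant from four sites with parameters $t_a,t_b,t_c,t_d$, so these four parameters are roots of $F_{c,r}$. By the regularity part of Assumption~\ref{ass:gp}, they are all of its roots, and hence $F_{c,r}=\lambda\prod(t-t_i)$. The sign of $F_{c,r}$ at another site parameter tells whether that site is closer or farther than $r$. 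For a nearest vertex, every other site is farther. For a farthest vertex, every other site is closer. Sign alternation across the four simple roots then forces two of the four cyclic arcs determined by $t_a,t_b,t_c,t_d$ to contain no site parameter. So, as in the proof of Theorem~\ref{thm:main}, each vertex has a cyclic support type given by two pairs of cyclically consecutive sites $\{i,i+1\}$ and $\{j,j+1\}$ that are disjoint and non-adjacent. There are $n(n-3)/2$ such types. I would restate this lemma from the structural section and take it as given.

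\textbf{Enumeration.} First sort the parameters cyclically, which costs $O(n\log n)$. Then, for each type, form the three equidistance equations $d(c,\ell_{i})^2=d(c,\ell_{i+1})^2=d(c,\ell_j)^2=d(c,\ell_{j+1})^2$. Multiplying each distance by the constant $|\mathrm{dir}|^2$ of its line makes every difference quadratic in $c\in\IR^3$, which gives three quadrics in three variables. By B\'ezout, each system has at most eight isolated solutions. General position rules out positive-dimensional components, and we would cite the lemma used for the bound of eight. Each real solution comes as a univariate representation of degree $\le 8$, obtained for instance by resultants. Evaluating $F_{c,r}$ then takes $O(1)$ time, and its sign pattern at the two arcs' endpoints is fixed by the type. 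The key point, and what I expect to be the main obstacle, is proving that testing only $O(1)$ signs suffices rather than all $n$ sites. The plan is the following. Since $F_{c,r}$ has exactly the four known roots, its sign on each open arc is constant and alternates. So once we check the sign on one arc, using a single sample site or the leading coefficient, every remaining site lies in the two nonempty arcs, whose signs are then determined. This yields a constant-time classification of each candidate as nearest vertex, farthest vertex, or neither. The only delicate point is ensuring that no fifth root is hidden, and the regularity assumption gives exactly this. Summing over the types gives $n(n-3)/2$ systems plus $O(n^2)$ operations, and under real-RAM this is $O(n^2)$ time.

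\textbf{Recovering the ruling.} When only the lines are given, the first step is to compute their Pl\"ucker vectors. Any three pairwise skew lines determine a unique quadric containing them, which can be found by solving a linear system in the 10 coefficients from incidence conditions at three points on each line. The opposite-ruling description then gives the plane spanned by their Pl\"ucker points intersected with the Klein quadric. This is the conic, and it is parametrized rationally from one known point, the Pl\"ucker point of $\ell_1$. Reading off each remaining site's parameter means solving a fixed-degree equation, so $O(n)$ fixed-degree operations suffice.
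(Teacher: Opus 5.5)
Your proposal is correct and follows the paper's argument essentially step for step. The steps match: cyclic sorting; one system of three quadrics per pair of vertex-disjoint gaps, hence exactly $n(n-3)/2$ systems; at most eight candidates per system; an $O(1)$ sign classification, valid because the four support parameters exhaust the roots of the contact quartic; and recovery of $\Gamma$ as the intersection of $K$ with the plane spanned by three Pl\"ucker points.

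Two minor points. First, what rules out a hidden fifth root is that the quartic is nonzero (Assumption~\ref{ass:gp}(5)) together with its degree, not a regularity condition. Second, computing the quadric through three sites is an unnecessary detour: three distinct points of a nonsingular conic are never collinear, so they already span $\Pi_\Gamma$.
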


\subparagraph{Our approach.}
The proof has three main ingredients. We first pass from the ambient Euclidean space to the projective parameter space of lines. Under the Pl\"ucker embedding, the Grassmannian $G(1,3)$ of projective lines is represented by the Klein quadric $K\subset\IP^5$, and one ruling of a smooth doubly ruled quadric is represented by a plane conic $\gamma:\IP^1\longrightarrow K$.

A related conic--regulus correspondence appears in the Cremona-geometric study of Goodman et al.~\cite[Theorems~10, 12, and~13; Proposition~16]{GoodmanEtAl2006}. Their results provide geometric motivation for our formulation, although their notions of Cremona and frame convexity are not used in our proof. The necessary background on Pl\"ucker coordinates, the Klein quadric, and ruling conics is given in Section~\ref{sec:preliminaries}.

For a Euclidean sphere with center $c$ and radius $r$, we use $R=r^2$ as the radius parameter for algebraic convenience. Tangency of a line with Pl\"ucker coordinates $(v,m)$ is then expressed by $\Phi_{c,R}(v,m) := \|c\times v-m\|^2-R\|v\|^2 = 0$. This is a homogeneous quadratic equation in the Pl\"ucker coordinates. Restricting it to the ruling conic gives $H_{c,R} := \Phi_{c,R}\circ\gamma$, a binary quartic on the real parameter line $\IP^1(\IR)$, which is topologically a circle~\cite[Example~7.12]{Tu2011}. Figure~\ref{fig:geometric-pipeline} summarizes this reduction.

\begin{figure}
	\centering
	\includegraphics[width=\linewidth]{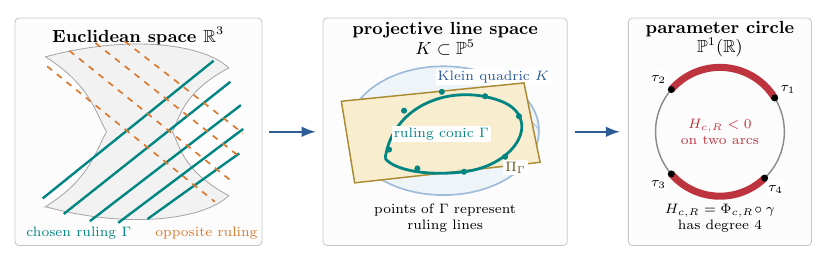}
\caption{The algebraic-geometric reduction used in the paper. The left arrow applies the Pl\"ucker embedding, which represents each line in Euclidean space by a point of the Klein quadric $K$. The selected ruling becomes a plane conic $\Gamma\subset K$, whose spanning projective plane is denoted by $\Pi_\Gamma = \langle \Gamma \rangle \subset \IP^5$; thus $\Gamma=K\cap\Pi_\Gamma$. The right arrow restricts the quadratic condition for tangency to a Euclidean sphere to $\Gamma$, producing a binary quartic on $\IP^1(\IR)$. The red arcs indicate the two negative arcs between the four simple contact roots.}
\label{fig:geometric-pipeline}
\end{figure}

At a regular nearest Voronoi vertex, the four supporting site parameters are four simple real roots of $H_{c,R}$. Its sign alternates between consecutive roots. Since the supporting sphere is empty, the two arcs on which $H_{c,R}<0$ contain no other site parameters. They must therefore be two cyclic gaps with disjoint endpoint sets. Among the $n$ gaps between consecutive site parameters, there are only $\binom{n}{2}-n = {n(n-3)}/{2}$ such unordered pairs. Hence only this many supporting quadruples can occur. For each quadruple, its possible Voronoi vertices are isolated solutions of three quadratic distance equations in three variables. B\'ezout's theorem bounds their number by eight, giving $V_N \le 8\cdot\frac{n(n-3)}{2} = 4n(n-3)$. For a farthest Voronoi vertex, the two positive arcs of the contact quartic play the same role, and the identical argument gives $V_F \le 4n(n-3)$. The formulas of Papadopoulou and Wang~\cite[Theorem~9]{PapadopoulouWang2026} then convert these vertex bounds into quadratic bounds on total combinatorial complexity.

To obtain the nearest lower bound, we work with one ruling of the fixed non-rotational hyperboloid $Q_*: \frac{x^2}{4}+y^2-z^2=1$. We identify a sphere whose restricted tangency equation has fourth-order contact at one ruling parameter. The local map from the four sphere parameters to the four lower coefficients of the normalized contact quartic has nonsingular Jacobian. The inverse function theorem therefore realizes every sufficiently small perturbation of $u^4$. Choosing $u_1<\cdots<u_n$ sufficiently close together, we realize the quartics
\[
	(u-u_i)(u-u_{i+1})(u-u_j)(u-u_{j+1}), \qquad 1\leq i\leq n-3,\quad i+2\leq j\leq n-1.
\]
Each quartic is negative only in two gaps containing no site parameters, and hence gives an empty sphere tangent to four lines. A generic choice of the parameters produces ${(n-2)(n-3)}/{2}$ distinct regular nearest Voronoi vertices.

Finally, the upper-bound argument is constructive. After cyclically sorting the site parameters, we enumerate all pairs of gaps with disjoint endpoint sets. For each pair, we compute constant-degree real univariate representations of the isolated real solutions of the three quadratic distance equations and use exact sign determination to retain the regular solutions having the required nearest or farthest sign pattern. Since the system size and degrees are fixed, bounded algebraic sampling and univariate sign determination require $O(1)$ arithmetic and sign operations per pair \cite[Algorithms~12.17 and~10.13]{basu2006algorithms}. As there are $O(n^2)$ gap pairs and at most eight isolated real candidates per pair, this gives an $O(n^2)$-time exact vertex-enumeration algorithm in the real-RAM model, with each vertex output as a constant-degree real univariate representation.

\subparagraph{Relation to previous work.}
For arbitrary lines in $\IR^3$, Sharir's general lower-envelope bound gives $O(n^{3+\eps})$ combinatorial complexity for every fixed $\eps>0$~\cite{Sharir1994}. Agarwal, Aronov, and Sharir~\cite{AgarwalAronovSharir1997} developed the corresponding framework for constructing such lower envelopes, while Aronov~\cite{Aronov2002} established an $\Omega(n^2)$ lower bound. Thus, the worst-case complexity for arbitrary lines remains between quadratic and near-cubic.

Papadopoulou and Wang~\cite{PapadopoulouWang2026} recently gave a complete classification of the Voronoi diagrams of four lines in general position. They showed that the number of vertices is even and that every value in $\{0,2,4,6,8\}$ is realizable. We use their result~\cite[Theorem~9]{PapadopoulouWang2026}, which relates the
numbers of vertices, edges, two-dimensional faces, and three-dimensional cells, to convert our vertex bounds into bounds on total combinatorial complexity.

Improved bounds are known for several restricted families of lines. Koltun and Sharir~\cite{KoltunSharir2003} proved a near-quadratic bound when the lines have a fixed number of orientations. Weisman, Chew, and Kedem~\cite{WeismanChewKedem2004} studied several line families with two degrees of freedom. One of their families is $\ell_{a,b} = \{(a,bt,t):t\in\IR\}$. It contains the hyperbolic-paraboloid ruling $\ell_a = \{(a,at,t):t\in\IR\} \subset \{y=xz\}$ as the one-parameter subfamily $b=a$. Their upper-bound analysis therefore applies to this particular ruling. Their lower-bound construction, however, varies $a$ and $b$ independently and does not remain within a single ruling. Our result treats every admissible ruling in a coordinate-free manner and provides a quadratic lower bound using lines contained in one fixed ruling. Glisse~\cite{Glisse2021} also proposed ruled surfaces as natural hosts for
lower-bound constructions for Voronoi diagrams of lines.

Our conic--regulus viewpoint is further motivated by Goodman et al.~\cite{GoodmanEtAl2006}. They introduce a Cremona transformation that linearizes an open part of the Grassmannian of lines. In dimension three, the inverse image of a general line in the linearized coordinates is a Pl\"ucker conic, and the corresponding family of lines sweeps a degree-two scroll that is smooth when its two generating lines are skew \cite[Theorems~10, 12, and~13; Proposition~16]{GoodmanEtAl2006}. This provides a projective-geometric precedent for representing a regulus by a conic in line space. However, their notions of Cremona convexity and frame convexity, as well as their Helly-type theorem for line transversals, are not used in our proofs.

Algebraic methods have also played an important role in several earlier studies of Voronoi diagrams. Everett et al.~\cite{EverettEtAl2009} analyze the Voronoi diagram of three lines through the algebraic structure of quadratic bisectors and their intersection curves. Cifuentes et al.~\cite{CifuentesEtAl2022} develop a general framework for the Voronoi cells of real algebraic varieties and study their algebraic boundaries. Our use of algebraic geometry is different. We represent a ruling by its Pl\"ucker conic and restrict the quadratic line--sphere tangency condition to this conic, obtaining a quartic on the real parameter circle. The alternating sign pattern of this quartic converts the geometric emptiness condition into a cyclic gap condition and reduces the number of possible supporting quadruples from $O(n^4)$ to $O(n^2)$.


\section{Preliminaries}\label{sec:preliminaries}
We collect only the projective and algebraic facts needed throughout the paper. Projective line geometry is used as a coordinate system for lines; all distances and Voronoi diagrams are defined using the Euclidean metric in $\IR^3$.  Standard references for Pl\"ucker coordinates, the Klein quadric, and ruled surfaces include Harris~\cite{Harris1992} and Pottmann and Wallukner~\cite{PottmannWallner2001}.

\subsection{Pl\"ucker coordinates and ruling conics}
Let $\ell=a+\IR v$,~$a\in\IR^3$,~$v\in\IR^3\setminus\{0\}$, be an affine line. Its squared distance from $c\in\IR^3$ is
\begin{equation}\label{eq:distance-projection}
        d(c,\ell)^2 = \|c-a\|^2 - \frac{\langle c-a,v\rangle^2}{\|v\|^2} = \frac{\|(c-a)\times v\|^2}{\|v\|^2}.
\end{equation}
Embed $\IR^3$ into $\IP^3(\IR)$ by $(x_1,x_2,x_3) \longmapsto [1:x_1:x_2:x_3]$.
If a projective line is spanned by homogeneous vectors $x,y\in\IR^4$, its Pl\"ucker coordinates are
\[
        p_{ij}=x_i y_j-x_j y_i, \qquad 0 \le i < j \le 3.
\]
They satisfy the Pl\"ucker relation
\begin{equation}\label{eq:klein}
        p_{01}p_{23}-p_{02}p_{13}+p_{03}p_{12}=0.
\end{equation}
The \emph{Klein quadric} is the subset of $\IP^5$ defined by the Pl\"ucker relation,
\[
        K := \left\{[p_{01}:p_{02}:p_{03}:p_{12}:p_{13}:p_{23}] \in\IP^5 \;\middle|\; p_{01}p_{23}-p_{02}p_{13}+p_{03}p_{12}=0 \right\}.
\]
The Pl\"ucker embedding identifies $K$ with the Grassmannian $G(1,3)$ of projective lines in $\IP^3$.  Equivalently, the points of $K$ are in one-to-one correspondence with the projective lines in $\IP^3$.
For metric calculations, set $m=a\times v$. The pair $(v,m)$ is homogeneous and satisfies
\begin{equation*}
        v\cdot m=0,
\end{equation*}
which is Equation~\eqref{eq:klein} after a fixed reordering of the coordinates. Since $c\times v-m=(c-a)\times v$, Equation~\eqref{eq:distance-projection} becomes
\begin{equation}\label{eq:plucker-distance}
        d(c,\ell)^2 = \frac{\|c\times v-m\|^2}{\|v\|^2}.
\end{equation}
A \emph{smooth doubly ruled real quadric surface} is a real projective quadric with no singular points that contains two one-parameter families of real lines, called its \emph{rulings}.  Such a surface is also called a \emph{smooth split real quadric}.  Equivalently, after an invertible real projective change of coordinates, its equation can be written as
\[
        Q_{\mathrm{Seg}}:X_0X_3-X_1X_2=0.
\]
Two distinct lines in the same ruling do not intersect, even in projective space. Consequently, when regarded as ordinary lines in $\IR^3$, they are skew. By contrast, every line in one ruling intersects every line in the other ruling. The projective closures of a one-sheeted hyperboloid and a hyperbolic paraboloid are standard examples.

\begin{proposition}\label{prop:ruling-conic}
Under the Pl\"ucker embedding, either ruling of a smooth split real quadric is a nonsingular plane conic.
\end{proposition}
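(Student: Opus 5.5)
Since both the Plücker embedding and the notion of a plane conic are invariant under real projective changes of coordinates of $\IP^3$, I will first reduce to the normal form $Q_{\mathrm{Seg}}:X_0X_3-X_1X_2=0$. An invertible linear map $A\in GL_4(\IR)$ acts on Plücker coordinates by the induced linear map $\wedge^2 A\in GL_6(\IR)$, which is invertible. It therefore sends planes in $\IP^5$ to planes and nonsingular conics to nonsingular conics, and it preserves $K$. So it suffices to treat $Q_{\mathrm{Seg}}$. Identify $\IP^3$ with $\IP(\IR^2\otimes\IR^2)$ via $X_0=s_0t_0$, $X_1=s_0t_1$, $X_2=s_1t_0$, $X_3=s_1t_1$. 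Then $Q_{\mathrm{Seg}}$ is the Segre image of $\IP^1\times\IP^1$, and the two rulings are the lines $\{[s]\}\times\IP^1$ and $\IP^1\times\{[t]\}$. The two cases are interchanged by swapping $X_1\leftrightarrow X_2$, so I treat only the first ruling.

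For fixed $[s_0:s_1]$, the line is spanned by $x=(s_0,0,s_1,0)$ and $y=(0,s_0,0,s_1)$. Computing $p_{ij}=x_iy_j-x_jy_i$ gives
\[
p_{01}=s_0^2,\quad p_{02}=0,\quad p_{03}=s_0s_1,\quad p_{12}=-s_0s_1,\quad p_{13}=0,\quad p_{23}=s_1^2 .
\]
Hence the ruling is the image of the map $\gamma:[s_0:s_1]\mapsto[s_0^2:0:s_0s_1:-s_0s_1:0:s_1^2]$. This image lies in the plane $\Pi=\{p_{02}=p_{13}=0,\ p_{03}+p_{12}=0\}\cong\IP^2$, with coordinates $(p_{01},p_{03},p_{23})$. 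Inside $\Pi$ it is exactly the conic $p_{01}p_{23}-p_{03}^2=0$. This is the Veronese (degree two) embedding of $\IP^1$, which is a nonsingular conic, since its defining quadratic form $p_{01}p_{23}-p_{03}^2$ has nondegenerate Gram matrix of rank three. I will check that $\gamma$ is injective and surjective onto the real points of this conic: any real point with $p_{01}p_{23}=p_{03}^2$ is of the form $[s_0^2:s_0s_1:s_1^2]$ up to sign, and $p_{01},p_{23}$ have the same sign. Then the ruling, as a subset of $K$, equals the conic. As a consistency check I will also confirm $\Pi\cap K=\gamma(\IP^1)$. Restricting the Klein relation to $\Pi$ gives $p_{01}p_{23}+p_{03}p_{12}=p_{01}p_{23}-p_{03}^2$, which matches the statement $\Gamma=K\cap\Pi_\Gamma$ used in Figure~\ref{fig:geometric-pipeline}.

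The main subtlety is the reduction step, not the computation. I must justify that every smooth doubly ruled real quadric is projectively equivalent over $\IR$ to $Q_{\mathrm{Seg}}$; the preliminaries state this as an equivalent definition, so I will cite it. I also need that $\wedge^2A$ maps the ruling of $Q$ onto the corresponding ruling of $Q_{\mathrm{Seg}}$, and that the spanning plane is carried along. Both follow because $A$ maps lines on $Q$ bijectively to lines on $Q_{\mathrm{Seg}}$, preserves disjointness, and hence preserves the two families. Finally, I will remark that "nonsingular" is meant over $\IR$: the real quadratic form has full rank, and the real conic is nonempty since it contains the images of $[1:0]$ and $[0:1]$.
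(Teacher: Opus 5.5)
Your proposal is correct and follows the paper's proof essentially verbatim. Both reduce to $Q_{\mathrm{Seg}}$, take the line spanned by $(s_0,0,s_1,0)$ and $(0,s_0,0,s_1)$, obtain the Pl\"ucker image $[s_0^2:0:s_0s_1:-s_0s_1:0:s_1^2]$, and note that projective coordinate changes act linearly on Pl\"ucker space and so preserve the conic property. Your additional checks are details the paper leaves implicit: the second ruling via $X_1\leftrightarrow X_2$, bijectivity onto the real points of $p_{01}p_{23}=p_{03}^2$, and $\Pi\cap K=\gamma(\IP^1)$.
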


\begin{proof}
The Segre parametrization of $Q_{\mathrm{Seg}}$ is
\[
        ([s:t],[u:w]) \longmapsto [su:sw:tu:tw].
\]
Fixing $[s:t]$ and varying $[u:w]$ gives a line in one ruling. This line is spanned by $(s,0,t,0)$ and $(0,s,0,t)$, and hence its Pl\"ucker coordinates are
\begin{equation*}
	[s:t] \longmapsto [s^2:0:st:-st:0:t^2].
\end{equation*}
This is a nonsingular conic in a projective plane. Projective changes of coordinates act linearly on Pl\"ucker space and therefore preserve the conic property.
\end{proof}

We consequently write a reduced parametrization of the chosen ruling as $\gamma:\IP^1\longrightarrow K$,~$\gamma([s:t]) = [g_{01}(s,t):\cdots:g_{23}(s,t)]$, where the $g_{ij}$ are homogeneous quadratic polynomials with no common factor. Related appearances of Pl\"ucker conics and degree-two ruled surfaces in line geometry can be found in Goodman et al.~\cite{GoodmanEtAl2006}.

\subsection{General position and combinatorial results}\label{sec:general-position}
We use the general-position conditions of Papadopoulou and Wang~\cite{PapadopoulouWang2026}, together with one condition specific to a ruling.

\begin{assumption}\label{ass:gp}
The set of lines $L\subset\Gamma$ satisfies the following conditions.
\begin{enumerate}
  \item The lines are pairwise skew, and no three are parallel to a common plane.
  \item No sphere is tangent to four sites at four coplanar contact points.
  \item No sphere is tangent to five sites.
  \item No four direction representatives are cocircular on the sphere of directions.
  \item No sphere-contact form vanishes identically on the entire ruling $\Gamma$.
\end{enumerate}
\end{assumption}

The first four conditions are those used by Papadopoulou and Wang~\cite{PapadopoulouWang2026}. The last condition excludes a sphere tangent to every generator of the ruling. The genericity of these conditions for the lower-bound construction is established in Section~\ref{sec:lower-bound}.

Finally, we record the combinatorial results that convert vertex bounds into total complexity bounds. Under Assumption~\ref{ass:gp}, Papadopoulou and Wang~\cite[Theorem~9]{PapadopoulouWang2026} prove
\begin{align}
 E_N&=2V_N+2n-2, & F_N&=V_N+3n-3, & C_N&=n, \label{eq:nvd-features}\\
 E_F&=2V_F+n^2-n-2, & F_F&=V_F+2n^2-2n-3, & C_F&=n^2-n. \label{eq:fvd-features}
\end{align}
Consequently,
\begin{align}
 \operatorname{comp}(\NVD(L)) &=4V_N+6n-5, \label{eq:nukvd-complexity}\\
 \operatorname{comp}(\FVD(L)) &=4V_F+4n^2-4n-5. \label{eq:fukvd-complexity}
\end{align}


\section{The Contact Quartic and Quadratic Upper Bounds}\label{sec:contact-and-upper-bounds}
This section converts the Euclidean sphere-contact condition into a degree-four polynomial on the parameter circle of the ruling. Its sign pattern restricts the possible supporting quadruples to quadratically many cyclic configurations, from which the vertex and total combinatorial complexity bounds follow.

\subsection{The contact quartic}
Let $\gamma:\mathbb{P}^{1}\longrightarrow \Gamma\subset K$ be a reduced Pl\"ucker parametrization of the chosen ruling. Since $\Gamma$ is a nonsingular plane conic, we may write
\begin{equation}
\label{eq:ruling-parametrization}
    \gamma([s:t]) = [p_{01}(s,t):p_{02}(s,t):p_{03}(s,t):p_{12}(s,t):p_{13}(s,t):p_{23}(s,t)],
\end{equation}
where the $p_{ij}$ are homogeneous quadratic polynomials with no common factor. After the fixed linear conversion to vector--moment coordinates, write $\gamma([s:t])=[v(s,t):m(s,t)]$.
For a sphere with center $c\in\mathbb{R}^{3}$ and squared-radius parameter $R\ge 0$, define
\[
    \Phi_{c,R}(v,m) := \|c\times v-m\|^{2}-R\|v\|^{2}.
\]
For the affine line represented by $(v,m)$,
\begin{equation}
\label{eq:contact-distance-relation}
    \Phi_{c,R}(v,m) = \|v\|^{2}\bigl(d(c,\ell)^{2}-R\bigr).
\end{equation}

\begin{definition}\label{def:contact-quartic}
The contact polynomial of the sphere $(c,R)$ along the ruling is
\begin{equation}\label{eq:contact-quartic}
    H_{c,R}(s,t) := \Phi_{c,R}\bigl(v(s,t),m(s,t)\bigr).
\end{equation}
\end{definition}

\begin{theorem}\label{thm:degree-four}
The polynomial $H_{c,R}$ is either identically zero or a homogeneous binary quartic. At every affine parameter value,
\begin{equation}\label{eq:contact-sign}
    \operatorname{sgn}H_{c,R}(s,t) = \operatorname{sgn} \bigl(d(c,\gamma([s:t]))^{2}-R\bigr).
\end{equation}
If $H_{c,R}\not\equiv0$ and the sphere is tangent to four distinct lines of $\Gamma$, then their four parameters are all the projective roots of $H_{c,R}$, each root is simple, and the signs on the four complementary arcs of $\mathbb{P}^{1}(\mathbb{R})$ alternate cyclically.
\end{theorem}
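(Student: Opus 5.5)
The argument has three parts: degree, sign, and root structure.

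\emph{Degree.} Since the vector--moment coordinates $(v,m)$ are a fixed linear change of the Pl\"ucker coordinates, each component of $v(s,t)$ and $m(s,t)$ is a homogeneous quadratic in $(s,t)$. For fixed $c$ the vector $c\times v-m$ is linear in $(v,m)$, so its components are also binary quadratics. Hence $\|c\times v-m\|^2$ and $R\|v\|^2$ are sums of squares of binary quadratics, and both are homogeneous of degree four. Thus $H_{c,R}$ is either identically zero or a binary quartic.

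\emph{Sign.} A parameter $[s:t]$ is affine exactly when $v(s,t)\neq 0$, that is, when the line is not at infinity. Every real line of the ruling is an affine line except possibly one generator lying in the plane at infinity. At an affine parameter, Equation~\eqref{eq:contact-distance-relation} gives $H_{c,R}=\|v\|^2\,(d(c,\ell)^2-R)$ with $\|v\|^2>0$, which proves \eqref{eq:contact-sign}. The representative $(s,t)$ matters only up to scaling: replacing it by $\lambda(s,t)$ multiplies $H$ by $\lambda^4>0$. So the sign of $H_{c,R}$ is well defined on $\IP^1(\IR)$.

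\emph{Root structure.} Suppose the sphere is tangent to four distinct lines of $\Gamma$. Tangency means $d(c,\ell_i)^2=R$, so the four distinct parameters are real projective roots of $H_{c,R}$. A nonzero binary quartic has at most four roots counted with multiplicity over $\mathbb{C}$. Therefore these four are all the roots, each is simple, and there are no complex roots. At a simple real root a binary form changes sign, as one sees by dehomogenizing in a chart that contains the root. On the circle $\IP^1(\IR)$ the four roots cut out four open arcs, and $H_{c,R}$ has constant nonzero sign on each arc. Crossing each root flips the sign, so the signs alternate cyclically. This is consistent with the circle because there is an even number of crossings.

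\emph{Main obstacle.} The step needing most care is the sign change on the projective circle. A binary form of even degree has a well-defined sign on $\IP^1(\IR)$, so this is legitimate. If one of the four arcs contains the parameter at infinity, the geometric interpretation \eqref{eq:contact-sign} fails at that single point. The sign pattern is unaffected, since it is determined by the polynomial alone. I would therefore prove alternation purely algebraically and invoke \eqref{eq:contact-sign} only at affine parameters.
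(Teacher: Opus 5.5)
Your proposal is correct and follows the paper's proof: degree four from composing a quadratic form with quadratic coordinates, the sign identity from $\Phi_{c,R}=\|v\|^2\bigl(d(c,\ell)^2-R\bigr)$, and root simplicity plus cyclic sign alternation from the root count of a nonzero binary quartic. Two of your points are only implicit in the paper and are worth keeping: the remark that at most one generator can lie at infinity, where the metric interpretation is unavailable, and the dehomogenization argument for the sign change at a simple root.
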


\begin{proof}
Every coordinate of $v(s,t)$ and $m(s,t)$ is homogeneous of degree two, while $\Phi_{c,R}$ is homogeneous quadratic in the six vector--moment coordinates. Thus \(H_{c,R}\) is homogeneous of degree four unless every coefficient vanishes. Equivalently, the proper intersection degree is $\deg(\Gamma)\deg(V(\Phi_{c,R}))=2\cdot 2=4$.

Four distinct tangencies give four distinct projective roots. A nonzero binary quartic has total projective root multiplicity four, so these are all its roots and each is simple. Since $H_{c,R}(s,t) = \|v(s,t)\|^{2} \bigl(d(c,\gamma([s:t]))^{2}-R\bigr)$, the sign identity follows. Moreover, $H_{c,R}(\lambda s,\lambda t) = \lambda^{4}H_{c,R}(s,t)$, so the sign is well-defined on $\mathbb{P}^{1}(\mathbb{R})$. It changes at each simple real root and therefore alternates on the four complementary arcs.
\end{proof}
Thus
\begin{align}
    H_{c,R}(s,t)<0 &\iff \gamma([s:t])\text{ meets }B(c,\sqrt R), \label{eq:negative-contact}\\
    H_{c,R}(s,t)=0 &\iff \gamma([s:t])\text{ is tangent to }\partial B(c,\sqrt R),\\
    H_{c,R}(s,t)>0 &\iff \gamma([s:t])\text{ misses }\overline{B}(c,\sqrt R).
    \label{eq:positive-contact}
\end{align}

\subsection{Cyclic supports}
Write the site parameters in cyclic order as $\tau_1,\tau_2,\ldots,\tau_n \in\mathbb{P}^{1}(\mathbb{R})$, where indices are taken modulo $n$.

\begin{definition}\label{def:cyclic-gaps}
The cyclic gap $g_i$ is the open arc from $\tau_i$ to $\tau_{i+1}$ containing no other site parameter. Two gaps are
disjoint if their endpoint sets are disjoint. Equivalently, they are vertex-disjoint edges of the cycle graph $C_n$.
\end{definition}

\begin{lemma}\label{lem:gap-pair-count}
The number of unordered pairs of disjoint cyclic gaps is $\binom n2-n = \frac{n(n-3)}2$.
\end{lemma}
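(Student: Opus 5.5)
The plan is to count, in the cycle graph $C_n$ whose vertices are the site parameters $\tau_1,\ldots,\tau_n$ and whose edges are the gaps $g_1,\ldots,g_n$, the unordered pairs of vertex-disjoint edges. Definition~\ref{def:cyclic-gaps} already identifies disjoint gaps with such edge pairs, so the statement becomes a purely graph-theoretic count for $n\ge 4$. For $n\ge 3$ the cycle $C_n$ is a simple graph with $n$ distinct edges, so $g_i=g_j$ only when $i\equiv j \pmod n$.

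Start from all $\binom n2$ unordered pairs $\{g_i,g_j\}$ with $i\ne j$, and subtract the pairs that share an endpoint. Two distinct edges of $C_n$ share an endpoint exactly when they are consecutive, that is, $\{g_i,g_{i+1}\}$ with indices modulo $n$. These share the vertex $\tau_{i+1}$. For $n\ge 4$, two distinct edges cannot share both endpoints, since the graph is simple.

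Each vertex $\tau_k$ has degree two, so exactly one pair of edges meets at it, namely $\{g_{k-1},g_k\}$. Because two distinct edges share at most one vertex, the map $k\mapsto\{g_{k-1},g_k\}$ is a bijection from vertices onto adjacent edge pairs, and there are exactly $n$ such pairs. This gives $\binom n2-n=\frac{n(n-1)-2n}{2}=\frac{n(n-3)}{2}$.

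As a cross-check, count ordered pairs directly. For a fixed $g_i$, the edges meeting it are $g_{i-1}$, $g_i$ and $g_{i+1}$, which are three distinct edges when $n\ge 3$. This leaves $n-3$ disjoint partners for each $g_i$, and dividing by two for unordered pairs again gives $n(n-3)/2$. The only delicate point is the requirement $n\ge4$ (or $n\ge3$) so that $g_{i-1}\ne g_{i+1}$ and no pair of adjacent edges is double-counted. For small $n$ such as $n=3$, the formula correctly gives zero.
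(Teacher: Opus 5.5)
Your proof is correct and takes the same approach as the paper: count the $\binom n2$ unordered pairs of edges of $C_n$ and subtract the $n$ pairs that share a vertex. Your bijection between vertices and adjacent edge pairs, and the ordered-pair cross-check, only spell out details the paper leaves implicit.
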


\begin{proof}
There are $\binom n2$ unordered pairs of distinct edges of $C_n$, and exactly $n$ of them share a vertex.
\end{proof}

\begin{lemma}\label{lem:cyclic-support}
Let $c$ be a regular Voronoi vertex with supporting squared-radius $R$.

\begin{enumerate}
    \item If $c$ is a nearest vertex, its four support parameters are the endpoints of the two negative arcs of $H_{c,R}$, and these arcs are disjoint cyclic gaps.

    \item If $c$ is a farthest vertex, its four support parameters are the endpoints of the two positive arcs of $H_{c,R}$, and these arcs are disjoint cyclic gaps.
\end{enumerate}
\end{lemma}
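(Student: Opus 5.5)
By definition of a regular vertex, $c$ is equidistant from exactly four sites $\ell_a,\ell_b,\ell_c,\ell_d$ at common distance $r=\sqrt R$. For a nearest vertex every other site is strictly farther than $r$; for a farthest vertex every other site is strictly closer. The sphere $\partial B(c,\sqrt R)$ is therefore tangent to exactly these four lines of $\Gamma$. Condition~5 of Assumption~\ref{ass:gp} gives $H_{c,R}\not\equiv 0$. Theorem~\ref{thm:degree-four} then shows that the four support parameters $\sigma_1,\dots,\sigma_4$, listed in cyclic order on $\IP^1(\IR)$, are exactly the projective roots of $H_{c,R}$, each simple, and that the signs on the four complementary arcs $(\sigma_1,\sigma_2),(\sigma_2,\sigma_3),(\sigma_3,\sigma_4),(\sigma_4,\sigma_1)$ alternate. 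Hence there are exactly two negative arcs and two positive arcs, and no two negative arcs are adjacent.

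For the nearest case, I use the sign identity \eqref{eq:contact-sign}. For each other site $\ell_k$ with parameter $\tau_k$, we have $d(c,\ell_k)^2>R$, so $H_{c,R}(\tau_k)>0$. Thus no site parameter lies in a negative arc. One point needs care: the sign is evaluated at affine parameter values. Every $\tau_k$ corresponds to an affine site line, so $\|v\|\neq 0$ there. Points of a negative arc where $v=0$ would represent a line at infinity, and such points cannot be site parameters anyway, so they cause no trouble. Each negative arc is an open arc between two consecutive support parameters and contains no site parameter. Its endpoints are sites, so it equals the cyclic gap $g_i$ joining two consecutive site parameters in the cyclic order $\tau_1,\dots,\tau_n$.

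The two negative arcs alternate with the positive arcs, so they are, say, $(\sigma_1,\sigma_2)$ and $(\sigma_3,\sigma_4)$. Their endpoint sets $\{\sigma_1,\sigma_2\}$ and $\{\sigma_3,\sigma_4\}$ are disjoint because the four support parameters are distinct. So the two gaps are disjoint in the sense of Definition~\ref{def:cyclic-gaps}, and their four endpoints are the four support parameters. The farthest case is the same argument with signs reversed: every other site satisfies $d(c,\ell_k)^2<R$, so $H_{c,R}(\tau_k)<0$, and the two positive arcs are the site-free gaps.

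The main obstacle I expect is the equivalence between ``exactly four supporting sites'' and ``no other site is tangent to the sphere'', that is, strict inequalities for the non-supporting sites. I would obtain this from the definition of a regular vertex (exactly four supporting lines), together with condition~3 of Assumption~\ref{ass:gp} (no sphere is tangent to five sites). That rules out a fifth root, so the strict sign on the remaining sites follows.
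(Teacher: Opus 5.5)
Your proof is correct and follows the paper's argument. Assumption~\ref{ass:gp}(5) gives a nonzero quartic, Theorem~\ref{thm:degree-four} gives four simple support roots with alternating signs, and the sign identity then keeps every site parameter out of the two negative arcs (nearest case) or the two positive arcs (farthest case).

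The strict inequalities you flag as the main obstacle are not actually needed. The paper uses only $H_{c,R}(\tau_i)\ge 0$ (resp.\ $\le 0$), and this already excludes site parameters from the open arcs, where $H_{c,R}$ is strictly negative (resp.\ positive). Your appeal to Assumption~\ref{ass:gp}(3) is also redundant, since the root count of the quartic already rules out a fifth root.
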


\begin{proof}
Assumption~\ref{ass:gp}(5) and Theorem~\ref{thm:degree-four} give a nonzero quartic with four simple real support roots and alternating signs. For a nearest vertex, emptiness of the supporting ball gives $H_{c,R}(\tau_i)\ge0$ at every site parameter. Thus neither negative arc contains a non-support site parameter, so the two negative arcs are disjoint
gaps.

For a farthest vertex, every line meets the closed supporting ball, and hence $H_{c,R}(\tau_i)\le0$ at every site parameter. Thus neither positive arc contains a non-support site parameter, so the two positive arcs are disjoint gaps.
\end{proof}

\begin{remark}
A linear order in one affine parameter chart misses the gap crossing the parameter value at infinity. The cyclic order on
$\mathbb{P}^{1}(\mathbb{R})$ is therefore essential.
\end{remark}

\subsection{Quadratic upper bounds}
We next bound the number of centers associated with one fixed supporting quadruple.

\begin{lemma}\label{lem:eight-center-bound}
A fixed quadruple of line sites supports at most eight regular real Voronoi vertices.
\end{lemma}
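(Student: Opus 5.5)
We will reduce the count of centers to a Bézout count for three quadrics in $\IR^3$. Fix four sites $\ell_1,\ldots,\ell_4$ with unit direction vectors and base points, and define $f_i(c)=d(c,\ell_i)^2$. By Equation~\eqref{eq:distance-projection}, after normalizing $\|v_i\|=1$, we have
\[
f_i(c)=\|c-a_i\|^2-\langle c-a_i,v_i\rangle^2 ,
\]
which is a quadratic polynomial in $c$. A Voronoi vertex supported by the quadruple is a point equidistant from all four lines, so it is a common zero of $F_j:=f_1-f_{j+1}$ for $j=1,2,3$. Each $F_j$ has degree at most two: the terms $\|c\|^2$ cancel, and what remains is $\langle c,v_{j+1}\rangle^2-\langle c,v_1\rangle^2$ plus lower-order terms. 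So the candidate centers form the real affine zero set of three quadrics in three variables.

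Next we use regularity. By the definition of a regular vertex, the three bisectors $\{F_j=0\}$ meet transversely at $c$, so the Jacobian of $(F_1,F_2,F_3)$ is nonsingular there. Hence each regular vertex is an isolated point of the complex projective intersection $\overline{V}(F_1,F_2,F_3)\subset\IP^3(\mathbb{C})$ and has intersection multiplicity exactly one. We then invoke the refined Bézout inequality: for hypersurfaces of degrees $d_1,d_2,d_3$ in $\IP^3$, the sum of the intersection multiplicities over the isolated components is at most $d_1d_2d_3=8$. This holds even if the intersection also has positive-dimensional components, for instance at infinity (Fulton's refined Bézout theorem). Each regular real vertex is a distinct isolated point and contributes at least one, so there are at most eight of them.

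The main obstacle is the degenerate case where some $F_j$ drops to degree $\le 1$ or the intersection contains curves at infinity. Assumption~\ref{ass:gp}(1),(4) is designed to avoid such situations, but we do not need to rule them out directly. The refined Bézout bound applies to isolated points without assuming the whole intersection is finite. If a degree drops, the product $d_1d_2d_3$ only decreases, so $8$ remains an upper bound. We should check that the quadratic parts of the $F_j$ do not vanish identically. If one did, it would force $v_{j+1}=\pm v_1$, so the two lines would be parallel, which contradicts Assumption~\ref{ass:gp}(1). Even this check is not needed for the bound itself, since a bisector of degree at most two suffices.
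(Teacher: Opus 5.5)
Your proposal is correct and follows essentially the same route as the paper: both reduce equidistance to three quadratic bisector equations in the center coordinates (the paper works in vector--moment coordinates and clears the denominators $\|v_i\|^2$, while you normalize to unit directions), both note that regular vertices are isolated simple solutions, and both apply the B\'ezout inequality to obtain at most $2^3=8$. Your explicit appeal to the refined (Fulton) form of B\'ezout makes precise what the paper's phrase ``isolated complex solutions, counted with multiplicity'' leaves implicit, namely that the bound counts only isolated points and still holds when the intersection has positive-dimensional components at infinity.
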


\begin{proof}
Let the four lines have vector--moment coordinates $\ell_i=(v_i,m_i)$ for $i=1,2,3,4$. A common equidistant center $x\in\mathbb{R}^{3}$ satisfies
\begin{equation*}
    d(x,\ell_2)^2-d(x,\ell_1)^2=0,~d(x,\ell_3)^2-d(x,\ell_1)^2=0,~d(x,\ell_4)^2-d(x,\ell_1)^2=0.
\end{equation*}
Set $q_i(x)=\|x\times v_i-m_i\|^2$ and $s_i=\|v_i\|^2$. Clearing the constant denominators gives $s_1q_i(x)-s_iq_1(x)=0$ for $i=2,3,4$. These are three quadratic equations in the three coordinates of $x$. The B\'ezout inequality bounds their isolated complex solutions, counted with multiplicity, by $2^3=8$. Every regular real Voronoi vertex is an isolated simple real solution. See Harris~\cite{Harris1992} and Koltun and Sharir~\cite{KoltunSharir2003}.
\end{proof}

\begin{theorem}\label{thm:quadratic-upper-bounds}
Under Assumption~\ref{ass:gp}, $V_N \le 4n(n-3)$ and $V_F \le 4n(n-3)$.
Consequently,
\begin{align*}
    \operatorname{comp}(\NVD(L)) &\le 16n^2-42n-5 = O(n^2),\\
    \operatorname{comp}(\FVD(L)) &\le 20n^2-52n-5 = O(n^2).
\end{align*}
Moreover, every such farthest diagram has
\[
    \operatorname{comp}(\FVD(L))=\Theta(n^2).
\]
\end{theorem}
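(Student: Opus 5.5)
The plan is to combine Lemma~\ref{lem:cyclic-support}, Lemma~\ref{lem:gap-pair-count}, and Lemma~\ref{lem:eight-center-bound} in a charging argument, and then substitute the resulting vertex bounds into the complexity formulas \eqref{eq:nukvd-complexity} and \eqref{eq:fukvd-complexity}.

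\textbf{Charging step.} Under Assumption~\ref{ass:gp}, I first note that every Voronoi vertex is regular. Condition (3) rules out five supporting sites. The remaining conditions are exactly the Papadopoulou--Wang hypotheses, and under them the defining bisectors meet transversely. I would cite their setting for this point. This is the step I expect to need the most care: if transversality is not directly available, I would instead count only the isolated solutions, since Lemma~\ref{lem:eight-center-bound} bounds isolated solutions anyway.

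\textbf{Nearest vertices.} To each nearest vertex $c$ with squared radius $R$, I assign the unordered pair of negative arcs of $H_{c,R}$. By Lemma~\ref{lem:cyclic-support}(1), these arcs form a pair of disjoint cyclic gaps $\{g_i,g_j\}$. The four endpoints of these two gaps are exactly the four support parameters, so the gap pair determines the supporting quadruple. Hence the number of nearest vertices is at most the number of gap pairs times the maximal number of vertices per quadruple. By Lemma~\ref{lem:gap-pair-count} and Lemma~\ref{lem:eight-center-bound},
\[
V_N\le 8\cdot \frac{n(n-3)}{2}=4n(n-3).
\]

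\textbf{Farthest vertices.} The argument is identical, using the positive arcs of $H_{c,R}$ and Lemma~\ref{lem:cyclic-support}(2). This gives $V_F\le 4n(n-3)$.

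\textbf{Complexity bounds.} I substitute into \eqref{eq:nukvd-complexity}:
\[
4\cdot 4n(n-3)+6n-5=16n^2-48n+6n-5=16n^2-42n-5.
\]
For the farthest diagram, \eqref{eq:fukvd-complexity} gives
\[
16n^2-48n+4n^2-4n-5=20n^2-52n-5.
\]

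\textbf{Farthest diagram is $\Theta(n^2)$.} The lower bound comes from \eqref{eq:fvd-features}: the farthest diagram has $C_F=n^2-n$ three-dimensional cells, so $\comp(\FVD(L))\ge n^2-n=\Omega(n^2)$. Combined with the upper bound just obtained, this gives $\Theta(n^2)$.
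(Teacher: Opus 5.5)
Your proposal is correct and follows the paper's proof essentially step for step. You charge each vertex via Lemma~\ref{lem:cyclic-support} to a pair of disjoint cyclic gaps, multiply the $n(n-3)/2$ pairs by the eight-center bound, substitute into \eqref{eq:nukvd-complexity}--\eqref{eq:fukvd-complexity}, and use $C_F=n^2-n$ for the farthest lower bound. The regularity step you flag is left implicit in the paper, but Assumption~\ref{ass:gp}(2)--(3) supplies it directly: non-coplanar contact points are equivalent to transversality of the three bisectors (Appendix~\ref{app:regularity-transversality}), so your fallback of counting only isolated solutions is unnecessary.
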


\begin{proof}
By Lemma~\ref{lem:cyclic-support}, every nearest vertex is uniquely charged to an unordered pair of negative gaps, and every farthest vertex is uniquely charged to an unordered pair of positive gaps. Lemma~\ref{lem:gap-pair-count} gives ${n(n-3)}/2$ possible gap pairs. Each pair determines four endpoint lines, and Lemma~\ref{lem:eight-center-bound} gives at most eight regular centers for those lines. Hence $V_N,V_F \le 8\cdot\frac{n(n-3)}2 = 4n(n-3)$.

Using the results of Papadopoulou and Wang as stated in Equations~\eqref{eq:nvd-features} and~\eqref{eq:fvd-features}, $\operatorname{comp}(\NVD(L)) = 4V_N+6n-5$ and $\operatorname{comp}(\FVD(L)) = 4V_F+4n^2-4n-5$. Finally, $C_F=n^2-n$ gives an $\Omega(n^2)$ lower bound for every farthest diagram.
\end{proof}

\begin{remark}
The farthest lower bound follows from its number of three-dimensional cells and does not assert quadratically many finite vertices. For the nearest diagram, a separate construction is required to obtain the worst-case lower bound.
\end{remark}


\section{A Lower Bound on a Fixed Ruling}\label{sec:lower-bound}
We construct quadratically many regular nearest vertices using one ruling of the fixed non-rotational one-sheeted hyperboloid
\begin{equation}
\label{eq:fixed-hyperboloid}
    Q_*: \frac{x^2}{4}+y^2-z^2=1.
\end{equation}
One ruling is $\ell_\theta(\tau) = \bigl( 2(\cos\theta-\tau\sin\theta), \sin\theta+\tau\cos\theta, \tau \bigr)$. Set $u=\tan({\theta}/{2})$. After multiplying a direction vector and its moment by the positive factor $1+u^2$, we obtain the polynomial vector--moment parametrization $V(u) = (-4u,\,1-u^2,\,1+u^2)$, $M(u) = (2u,\,-2(1-u^2),\,2(1+u^2))$. It satisfies $V(u)\cdot M(u)=0$.

For $c=(X,Y,Z)$ and squared radius $R$, define
\begin{equation}\label{eq:lower-bound-contact-polynomial}
    P_{c,R}(u) := \|c\times V(u)-M(u)\|^2-R\|V(u)\|^2 = \sum_{k=0}^{4}p_k(c,R)u^k.
\end{equation}

\subsection{A fourth-order contact sphere}
\begin{lemma}\label{lem:fourth-order-contact}
For $c_*=\left(5/2,0,0\right)$ and $R_*=\frac14$, the contact polynomial is $P_{c_*,R_*}(u)=40u^4$.
\end{lemma}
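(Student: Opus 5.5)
The statement is a finite polynomial identity, so I will prove it by direct expansion of \eqref{eq:lower-bound-contact-polynomial} at $c_*=(5/2,0,0)$ and $R_*=1/4$, using the explicit $V(u)$ and $M(u)$ of Section~\ref{sec:lower-bound}. No earlier result is needed beyond the definition of $P_{c,R}$.

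First I compute the cross product. For $c=(X,0,0)$,
\[
c\times V=(0,\,-XV_3,\,XV_2).
\]
With $X=5/2$ this gives
\[
c_*\times V(u)=\bigl(0,\,-\tfrac52(1+u^2),\,\tfrac52(1-u^2)\bigr).
\]
Subtracting $M(u)=(2u,\,-2(1-u^2),\,2(1+u^2))$ gives
\[
c_*\times V(u)-M(u)=\bigl(-2u,\,-\tfrac12-\tfrac92u^2,\,\tfrac12-\tfrac92u^2\bigr).
\]
Squaring and summing, the cross terms $\pm\tfrac92u^2$ in the last two coordinates cancel. This leaves
\[
\|c_*\times V(u)-M(u)\|^2=\tfrac12+4u^2+\tfrac{81}{2}u^4.
\]

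Next I compute the radius term. We have
\[
\|V(u)\|^2=16u^2+(1-u^2)^2+(1+u^2)^2=2+16u^2+2u^4,
\]
so
\[
R_*\|V(u)\|^2=\tfrac12+4u^2+\tfrac12u^4.
\]
Subtracting, the constant and $u^2$ terms cancel exactly. The odd powers never appear, so $P_{c_*,R_*}(u)=40u^4$.

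Each step is a routine expansion, and I do not expect a real obstacle. The only point needing care is the sign convention of the cross product together with the sign of the moment $M$. A sign slip there would destroy the cancellation of the $u^2$ terms. I will therefore double-check it against the stated identity $V(u)\cdot M(u)=0$, and against the fact that $P_{c_*,R_*}\ge 0$ must hold. The latter follows from \eqref{eq:contact-distance-relation}, since a fourth-order contact sphere lies outside every other line of the ruling.
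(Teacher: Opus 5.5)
Your expansion is correct and is essentially the paper's proof. The paper also simply substitutes $c_*$ and $R_*$; the only difference is that Appendix~\ref{app:contact-expansion} first expands all five coefficients $p_k^\lambda$ for general $(X,Y,Z,R)$ and $\lambda$, which it reuses for the Jacobian of $\Psi$, and then specializes, whereas you restrict to $c=(X,0,0)$ before expanding. (Your proposed sanity check $P_{c_*,R_*}\ge 0$ is not independent evidence, since that this sphere misses the other lines is a consequence of the identity rather than something known beforehand.)
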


\begin{proof}
This follows by substituting $c_*$ and $R_*$ into Equation~\eqref{eq:lower-bound-contact-polynomial}. The full
coefficient expansion is given in Appendix~\ref{app:contact-expansion}.
\end{proof}

Thus the sphere $(c_*,R_*)$ has fourth-order contact with the ruling line corresponding to $u=0$. The role of this degenerate contact is to provide a base point from which arbitrary nearby quartics can be realized.

\subsection{Local realization of contact quartics}
In a neighborhood of $(c_*,R_*)$, the leading coefficient $p_4$ remains positive. Normalize the contact polynomial to be monic and define
\begin{equation*}
    \Psi(X,Y,Z,R) := \left( \frac{p_0}{p_4}, \frac{p_1}{p_4}, \frac{p_2}{p_4}, \frac{p_3}{p_4} \right).
\end{equation*}

\begin{lemma}\label{lem:nonsingular-coefficient-map}
The coefficient map satisfies
\[
    \det D\Psi(c_*,R_*)=\frac1{250}\neq0.
\]
\end{lemma}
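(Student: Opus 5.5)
The plan is a direct computation, simplified by two structural facts. First, by Lemma~\ref{lem:fourth-order-contact}, at the base point $(c_*,R_*)$ we have $p_0=p_1=p_2=p_3=0$ and $p_4=40$. Differentiating $p_k/p_4$ by the quotient rule gives
\[
\partial(p_k/p_4)=\partial p_k/p_4-p_k\,\partial p_4/p_4^2 .
\]
The second term vanishes at the base point, so $D\Psi(c_*,R_*)=\tfrac1{40}\,D(p_0,p_1,p_2,p_3)(c_*,R_*)$. It follows that
\[
\det D\Psi(c_*,R_*)=40^{-4}\det J,
\]
where $J$ is the $4\times4$ Jacobian of $(p_0,\dots,p_3)$ with respect to $(X,Y,Z,R)$. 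The claim therefore reduces to $\det J=40^4/250=10240$.

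To compute $J$, I will write $c\times V(u)-M(u)=(A,B,C)$ explicitly:
\[
A=Y(1+u^2)-Z(1-u^2)-2u,\quad
B=-4uZ-X(1+u^2)+2(1-u^2),\quad
C=X(1-u^2)+4uY-2(1+u^2).
\]
I will also use $\|V(u)\|^2=2+16u^2+2u^4$. At the base point these become $A=-2u$, $B=-\tfrac12-\tfrac92u^2$, $C=\tfrac12-\tfrac92u^2$, which reproduces $40u^4$ as a check. The partial derivatives of $P=A^2+B^2+C^2-R\|V\|^2$ are then:
\begin{itemize}
\item $\partial_X P=2(-B(1+u^2)+C(1-u^2))$,
\item $\partial_Y P=2(A(1+u^2)+4uC)$,
\item $\partial_Z P=2(-A(1-u^2)-4uB)$,
\item $\partial_R P=-(2+16u^2+2u^4)$.
\end{itemize}
Evaluating each at the base point gives an explicit quartic in $u$. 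Its coefficients of $u^0,\dots,u^3$ form one column of $J$.

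Parity should make the matrix nearly block-diagonal. At the base point $A$ is odd in $u$, while $B$ and $C$ are even. Hence $\partial_X P$ and $\partial_R P$ are even polynomials and contribute only to the $u^0$ and $u^2$ rows. Similarly, $\partial_Y P$ and $\partial_Z P$ are odd and contribute only to the $u^1$ and $u^3$ rows. After permuting rows and columns, $\det J$ factors as the product of two $2\times2$ determinants: one from $(X,R)$ against $(u^0,u^2)$, and one from $(Y,Z)$ against $(u^1,u^3)$. Each is a short hand computation. I expect the only real obstacle to be bookkeeping of signs and the sign of the row permutation, so I will cross-check the final product against $10240$.

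The appendix expansion of $p_k(c,R)$ can serve as an independent verification of these coefficient columns. Once $\det J=10240$ is established, the stated value $\det D\Psi=10240/40^4=1/250\neq0$ follows immediately.
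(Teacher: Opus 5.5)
Your proposal is correct and follows essentially the same route as the paper: the appendix computes $D\Psi$ directly at the base point, working for the whole family $Q_\lambda$ from the explicit coefficients $p_k^\lambda$ (getting $\det = \lambda/(4(\lambda^2+1)^3)$) and then setting $\lambda=2$. Carrying out your plan gives $J$ with rows $(2,0,0,-2)$, $(0,0,8,0)$, $(0,0,0,-16)$, $(0,-40,32,0)$, which is $40$ times the paper's matrix; your parity blocks have determinants $-32$ and $320$, and together with the odd row permutation this gives $\det J = 10240$, as required.
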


\begin{proof}
The exact Jacobian matrix and its derivation are given in Appendix~\ref{app:coefficient-jacobian}.
\end{proof}

\begin{corollary}\label{cor:local-quartic-realization}
There is a neighborhood $\mathcal U$ of the origin in $\mathbb{R}^{4}$ such that every monic quartic
\[
    u^4+a_3u^3+a_2u^2+a_1u+a_0, \qquad (a_0,a_1,a_2,a_3)\in\mathcal U,
\]
is the normalized contact polynomial of a unique sphere $(c,R)$ near $(c_*,R_*)$. Its squared radius and its unnormalized leading coefficient are positive.
\end{corollary}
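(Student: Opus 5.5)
The plan is to apply the inverse function theorem to $\Psi$ at $(c_*,R_*)$, using Lemma~\ref{lem:fourth-order-contact} and Lemma~\ref{lem:nonsingular-coefficient-map}.

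First, $\Psi$ is well defined and smooth near $(c_*,R_*)$. Each $p_k(c,R)$ is a polynomial in $(X,Y,Z,R)$, since $V$ and $M$ have polynomial entries and $P_{c,R}$ is quadratic in $c$ and linear in $R$. By Lemma~\ref{lem:fourth-order-contact}, $p_4(c_*,R_*)=40>0$. By continuity there is an open neighborhood $\mathcal W$ of $(c_*,R_*)$ on which $p_4>0$. There $\Psi$ is a quotient of polynomials with nonvanishing denominator, hence $C^\infty$.

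Second, Lemma~\ref{lem:fourth-order-contact} gives $p_0=p_1=p_2=p_3=0$ at $(c_*,R_*)$, so $\Psi(c_*,R_*)=0$. Lemma~\ref{lem:nonsingular-coefficient-map} says $D\Psi(c_*,R_*)$ is invertible. The inverse function theorem then gives an open neighborhood $\mathcal W'\subset\mathcal W$ of $(c_*,R_*)$ and an open neighborhood $\mathcal U$ of $0\in\mathbb{R}^4$ such that $\Psi|_{\mathcal W'}:\mathcal W'\to\mathcal U$ is a diffeomorphism. Hence for every $(a_0,a_1,a_2,a_3)\in\mathcal U$ there is exactly one $(c,R)\in\mathcal W'$ with $\Psi(c,R)=(a_0,a_1,a_2,a_3)$. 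For that sphere,
\[
P_{c,R}(u)=p_4(c,R)\bigl(u^4+a_3u^3+a_2u^2+a_1u+a_0\bigr),
\]
so the given monic quartic is its normalized contact polynomial. Uniqueness is meant among spheres in $\mathcal W'$, which is what "near $(c_*,R_*)$" means here.

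Third, the positivity claims. Shrink $\mathcal W'$, and correspondingly $\mathcal U=\Psi(\mathcal W')$, so that $\mathcal W'$ lies in the open set $\{R>0\}\cap\{p_4>0\}$. This is possible because $R_*=1/4>0$ and $p_4(c_*,R_*)=40>0$, and both conditions are open. Restricting a diffeomorphism to a smaller open neighborhood keeps it a diffeomorphism onto an open image containing $0$. So every realizing sphere has $R>0$ and a positive unnormalized leading coefficient.

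The main obstacle is not in this argument. It lies in the Jacobian computation of Lemma~\ref{lem:nonsingular-coefficient-map}, which is assumed here. The one point needing care is the quotient rule in $D\Psi$. At the base point $p_0=\dots=p_3=0$, so $D(p_k/p_4)=Dp_k/p_4$. Thus $\det D\Psi(c_*,R_*)=\det\bigl(\partial p_k/\partial(X,Y,Z,R)\bigr)_{k=0}^{3}/40^4$, and nonsingularity is equivalent to the $4\times4$ matrix of derivatives of the lower coefficients having nonzero determinant. This is consistent with the lemma as stated.
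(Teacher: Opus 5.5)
Your proposal is correct and is essentially the paper's proof. The paper applies the real inverse function theorem to $\Psi$ at $(c_*,R_*)$, using $\Psi(c_*,R_*)=0$ from Lemma~\ref{lem:fourth-order-contact} and the nonzero Jacobian from Lemma~\ref{lem:nonsingular-coefficient-map}, and then shrinks the neighborhood so that $R>0$ and $p_4>0$ persist. Your extra details are accurate and match the appendix computation: smoothness of $\Psi$ where $p_4>0$, and the fact that $D(p_k/p_4)=Dp_k/p_4$ at the base point because the lower coefficients vanish there.
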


\begin{proof}
Apply the real inverse function theorem to $\Psi$ at $(c_*,R_*)$. Positivity of $R$ and $p_4$ persists after shrinking
the neighborhood.
\end{proof}

\subsection{The adjacent-gap construction}
Choose $u_1<u_2<\cdots<u_n$ in a sufficiently small interval about zero. For $1 \le i \le n-3$ and $i+2 \le j \le n-1$, define
\begin{equation}\label{eq:adjacent-gap-quartic}
    Q_{ij}(u) := (u-u_i)(u-u_{i+1})(u-u_j)(u-u_{j+1}).
\end{equation}
As all $u_k$ tend to zero, the four lower coefficients of every $Q_{ij}$ tend to zero. Since there are finitely many index pairs for fixed $n$, all these coefficient vectors lie in $\mathcal U$ when the interval is sufficiently small. Corollary
\ref{cor:local-quartic-realization} therefore supplies a sphere $S_{ij}$ whose normalized contact polynomial is $Q_{ij}$.

\begin{lemma}\label{lem:empty-adjacent-gap-spheres}
The sphere $S_{ij}$ is tangent to $\ell_{u_i}$,~$\ell_{u_{i+1}}$,~$\ell_{u_j}$,~$\ell_{u_{j+1}}$, and its open ball meets no input line.
\end{lemma}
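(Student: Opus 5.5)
By Corollary~\ref{cor:local-quartic-realization}, the sphere $S_{ij}=(c,R)$ has unnormalized contact polynomial $P_{c,R}(u)=p_4(c,R)\,Q_{ij}(u)$, and here $p_4(c,R)>0$. The plan is to read off tangency and emptiness from the sign of this polynomial through the relation \eqref{eq:contact-distance-relation}.

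First I record the sign relation for this ruling. For the line $\ell_u$ with the polynomial vector--moment pair $(V(u),M(u))$ we have $V(u)\neq0$; indeed $\|V(u)\|^2=16u^2+(1-u^2)^2+(1+u^2)^2=2(1+u^2)^2+16u^2>0$. The pair is a positive multiple of a genuine Pl\"ucker pair for $\ell_u$, and $\Phi_{c,R}$ is homogeneous quadratic. So \eqref{eq:contact-distance-relation} gives
\[
P_{c,R}(u)=\|V(u)\|^2\bigl(d(c,\ell_u)^2-R\bigr).
\]
Hence $\operatorname{sgn}P_{c,R}(u)=\operatorname{sgn}\bigl(d(c,\ell_u)^2-R\bigr)$ for every real $u$.

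\textbf{Tangency.} $Q_{ij}$ vanishes at $u_i,u_{i+1},u_j,u_{j+1}$. Therefore $d(c,\ell_{u_k})^2=R$ for these four indices $k$, so $S_{ij}$ is tangent to the four lines. Since $R>0$, the sphere is nondegenerate.

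\textbf{Emptiness.} For any other input parameter $u_k$, I check that $Q_{ij}(u_k)>0$. The four roots are distinct, because $j\ge i+2$ gives $u_i<u_{i+1}<u_j<u_{j+1}$. The quartic $Q_{ij}$ is monic and has real simple roots, so it is negative exactly on $(u_i,u_{i+1})\cup(u_j,u_{j+1})$ and positive elsewhere on $\mathbb{R}$. No $u_k$ lies in either open interval, because the roots are consecutive among $u_1<\dots<u_n$. Hence $Q_{ij}(u_k)>0$, so $d(c,\ell_{u_k})^2>R$, and the line $\ell_{u_k}$ misses the closed ball. In particular it misses the open ball. The four tangent lines themselves are at distance exactly $\sqrt R$, so they also do not meet the open ball.

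\textbf{Parameter at infinity.} The step needing the most care is to make sure that all input lines are captured by the affine chart in $u$. The input lines are precisely $\ell_{u_1},\dots,\ell_{u_n}$ with all $u_k$ finite. So the point at infinity of $\mathbb{P}^1(\mathbb{R})$, where the cyclic-gap remark warns that care is needed, is not a site and plays no role here. The remaining subtle point is that \eqref{eq:contact-distance-relation} holds with a nonvanishing factor $\|V(u)\|^2$, and that was verified in the first step. This completes the argument.
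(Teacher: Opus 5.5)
Your proof is correct and follows the paper's own argument: the four roots of $Q_{ij}$ give the tangencies, and $Q_{ij}>0$ at every non-support site, together with $p_4>0$ and the sign relation \eqref{eq:contact-distance-relation}, shows that the other lines miss the closed ball. There is one harmless arithmetic slip: $(1-u^2)^2+(1+u^2)^2=2(1+u^4)$, so $\|V(u)\|^2=2(u^4+8u^2+1)$ rather than $2(1+u^2)^2+16u^2$, and this is still strictly positive, so nothing downstream changes.
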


\begin{proof}
The four factors of $Q_{ij}$ give the four tangencies. Since $Q_{ij}$ is monic and has four ordered simple roots, it is negative exactly on $(u_i,u_{i+1})$ and $(u_j,u_{j+1})$. These are consecutive-site gaps. At every non-support site parameter, $Q_{ij}$ is strictly positive. The unnormalized leading coefficient is positive, so the actual contact polynomial has the same sign. Equation~\eqref{eq:positive-contact} shows that every non-support line misses the closed ball.
\end{proof}

\subsection{Regularity and general position}
The Jacobian for the four tangency equations factors into 
\[
\text{a nonsingular diagonal matrix} \times \text{a Vandermonde matrix} \times D\Psi.
\]
It is therefore nonsingular for sufficiently small distinct roots. Equivalently, the four contact points are affinely independent and the three bisectors meet transversely. The following lemma shows that the site parameters can be chosen so that this transversality holds simultaneously with all the conditions in Assumption~\ref{ass:gp}.

\begin{lemma}\label{lem:lower-bound-genericity}
For every $n$, the parameters $u_1<\cdots<u_n$ can be chosen in an arbitrarily small interval about zero so that Assumption~\ref{ass:gp} holds and every sphere $S_{ij}$ supplied by Equation~\eqref{eq:adjacent-gap-quartic} determines a regular nearest Voronoi vertex.
\end{lemma}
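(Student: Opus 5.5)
We will use a genericity argument. The admissible parameter tuples $(u_1,\ldots,u_n)$ with $u_1<\cdots<u_n$ in a small interval $I_\delta=(-\delta,\delta)$ form an open subset of $\IR^n$. Each condition in Assumption~\ref{ass:gp}, and each regularity requirement for the spheres $S_{ij}$, will be shown to fail only on a proper real-analytic (in fact semialgebraic) subset of this open set, or to hold automatically. A finite union of proper analytic subsets of a connected open set is nowhere dense. So a valid choice exists inside every $I_\delta^n$, and this gives the ``arbitrarily small interval'' clause.

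\textbf{Automatic conditions.} Condition (1), pairwise skewness, holds because distinct lines of one ruling are skew. For parallelism to a common plane, the directions $V(u)=(-4u,1-u^2,1+u^2)$ lie on the conic $\{x^2/4+y^2-z^2=0\}$ in direction space. Three directions are coplanar through the origin only if three points of that conic are collinear in $\IP^2$. A nondegenerate conic contains no three collinear points, so this never happens for distinct $u$.

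Condition (4) asks whether four directions are cocircular on the sphere of directions. Normalizing $V(u)/\|V(u)\|$, cocircularity of four unit vectors means they lie in a common affine plane, which is one polynomial equation $\det[\,\hat V(u_a)\ 1\,]=0$ in the four parameters. I would check that this determinant is not identically zero, for instance by evaluating at one explicit quadruple. It then vanishes only on a proper algebraic set.

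Condition (5) is independent of the $u_k$: the line with $u=\infty$ already shows that no sphere is tangent to every generator of $Q_*$. Alternatively, a sphere tangent to all lines of the ruling would force $P_{c,R}\equiv0$, and the explicit coefficients of Appendix~\ref{app:contact-expansion} can be checked to exclude this.

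\textbf{Conditions (2) and (3).} For (3), consider the incidence variety of $(c,R,u_1,\ldots,u_5)$ with $P_{c,R}(u_a)=0$ for all five $a$. A nonzero quartic has at most four roots, so five distinct tangencies force $P_{c,R}\equiv 0$, which is excluded by (5). Condition (3) therefore also holds automatically.

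For (2), the tangent spheres to a given quadruple are isolated, with at most eight by Lemma~\ref{lem:eight-center-bound}. Coplanarity of the four contact points (the feet of the perpendiculars from $c$) is then one extra analytic equation on a finite branched cover of parameter space. The set of bad tuples is the projection of a semialgebraic set of dimension at most $n-1$. I would verify that this set is proper by exhibiting one quadruple whose spheres avoid coplanar contacts.

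\textbf{Regularity of the $S_{ij}$.} This is the main obstacle, because the adjacent spheres are constructed near the degenerate base sphere $(c_*,R_*)$. By the factorization just stated, the Jacobian of the four tangency equations in $(c,R)$ is (diagonal)$\times$Vandermonde$(u_i,u_{i+1},u_j,u_{j+1})\times D\Psi$. For distinct roots near $0$ this is nonsingular, using Lemma~\ref{lem:nonsingular-coefficient-map} and continuity of $D\Psi$ on the neighborhood from Corollary~\ref{cor:local-quartic-realization}.

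Eliminating $R$ turns the nonsingular $4\times4$ system into transversality of the three bisectors at $c_{ij}$. Indeed, $R$ enters each equation linearly with coefficient $-\|V(u_a)\|^2\neq0$, so the reduced $3\times3$ Jacobian of $s_1q_a-s_aq_1$ is nonsingular. Its rows are proportional to $c-\text{foot}_a$ differences, which also yields affine independence of the contact points and hence (2) for these particular spheres.

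Exactly four sites support $c_{ij}$ by Lemma~\ref{lem:empty-adjacent-gap-spheres}, since every other line misses the closed ball. So $c_{ij}$ is a regular nearest vertex.

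Distinctness of the $(n-2)(n-3)/2$ vertices follows because distinct $(i,j)$ give distinct support sets. The uniqueness in Corollary~\ref{cor:local-quartic-realization} together with strict emptiness then prevents two pairs from sharing a center.
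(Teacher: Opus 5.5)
Your treatment of conditions (1), (3) and (4), and of the regularity of each $S_{ij}$, matches the paper. For the $S_{ij}$ you use the factorization (diagonal)$\times$Vandermonde$\times D\Psi$, followed by row reduction to affine independence of the feet and transverse bisectors. The gap is condition~(2) for an \emph{arbitrary} four-site subset. Assumption~\ref{ass:gp} requires it for every sphere tangent to any four sites, not only for the $S_{ij}$, and your argument there does not go through as stated.

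First, Lemma~\ref{lem:eight-center-bound} bounds only isolated solutions. It does not make the tangent spheres of every quadruple isolated, so your ``finite branched cover'' is not available everywhere. Second, and more seriously, the bad set is a \emph{projection} of a semialgebraic set, not the zero set of a single analytic function. (In (4) it is such a zero set, which is why one nonzero evaluation suffices there.) Exhibiting one quadruple whose real tangent spheres have non-coplanar contacts therefore only shows the bad set is not everything. It does not show the bad set has empty interior: coplanarity could vanish identically on a real branch of the incidence variety lying over a different region of parameter space.

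The missing idea is that your factorization holds globally, not only near $(c_*,R_*)$. At any sphere tangent to four distinct site parameters, $p_4\neq0$; otherwise $P_{c,R}$ would have degree at most three with four roots, hence $P_{c,R}\equiv0$, contradicting~(5). Also, the term involving the derivative of $p_4$ is multiplied by the normalized quartic evaluated at a root, so it drops out. This gives the chain: coplanar contacts $\iff$ singular tangency Jacobian $\iff$ $D\Psi(c,R)$ singular $\iff$ the Vieta coefficient vector of the quadruple is a critical value of $\Psi$. The critical values of the complexified, generically finite map $\Psi$ lie in a proper algebraic set $\mathcal B$. The Vieta map has Vandermonde Jacobian, so its pullback of $\mathcal B$ has empty interior. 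Avoiding these finitely many pullbacks yields~(2), and finiteness of every fiber, for all quadruples at once; this is the paper's argument.

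Separately, your first justification of~(5) via the line $u=\infty$ cannot be right. That line is an ordinary affine generator, and the same reasoning would apply to $x^2+y^2-z^2=1$. There every generator $(\cos\theta-t\sin\theta,\sin\theta+t\cos\theta,t)$ has squared distance exactly $1+Z^2/2$ from $(0,0,Z)$, so (5) fails. Condition~(5) is exactly where non-rotationality enters, so your alternative must actually be carried out. Setting $p_0^\lambda=\cdots=p_4^\lambda=0$ gives $\lambda XZ=(\lambda^2+1)Y$, $\lambda XY=(\lambda^2-1)Z$ and $2\lambda X+YZ=0$. These force either $Y^2=-2(\lambda^2-1)<0$, or $X=Y=Z=0$ with the incompatible values $R=\lambda^2$ and $R=\lambda^{-2}$.
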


\begin{proof}
The Jacobian factorization, affine independence of the contact points, exclusion of a sphere tangent to the entire ruling, and the critical-value argument for all four-site subsets are proved in Appendix~\ref{app:lower-bound-details}.
\end{proof}

\begin{theorem}\label{thm:quadratic-nearest-lower}
For every $n \ge 4$, one ruling of the fixed quadric
\[
    Q_*:\frac{x^2}{4}+y^2-z^2=1
\]
contains a general-position set of $n$ affine lines whose nearest Voronoi diagram has at least ${(n-2)(n-3)}/{2}$ distinct regular vertices.
\end{theorem}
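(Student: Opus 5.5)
The proof assembles the preceding lemmas. First, use Lemma~\ref{lem:lower-bound-genericity} to fix $u_1<\cdots<u_n$ in a small interval about zero. The interval is chosen small enough that all coefficient vectors of the quartics $Q_{ij}$ in Equation~\eqref{eq:adjacent-gap-quartic} lie in the neighborhood $\mathcal U$ of Corollary~\ref{cor:local-quartic-realization}. The choice also ensures that Assumption~\ref{ass:gp} holds and that every sphere $S_{ij}$ yields a regular vertex. The sites are $\ell_{u_1},\ldots,\ell_{u_n}$ in the chosen ruling of $Q_*$. Since the $u_k$ are distinct, the corresponding values $\theta_k=2\arctan u_k$ are distinct, so these are $n$ distinct lines of the ruling.

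Next, for each admissible pair $(i,j)$ with $1\le i\le n-3$ and $i+2\le j\le n-1$, let $c_{ij}$ be the center of $S_{ij}$. By Lemma~\ref{lem:empty-adjacent-gap-spheres}, $c_{ij}$ is at distance $\sqrt{R_{ij}}$ from the four sites $\ell_{u_i},\ell_{u_{i+1}},\ell_{u_j},\ell_{u_{j+1}}$. Every other site misses the closed ball, so it lies at strictly larger distance. Hence these four sites are exactly the nearest sites to $c_{ij}$, and $c_{ij}$ is a point of $\NVD(L)$ equidistant from precisely four sites. By Lemma~\ref{lem:lower-bound-genericity}, the three bisectors meet transversely at $c_{ij}$, so $c_{ij}$ is a regular nearest Voronoi vertex.

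It remains to count. The number of pairs is
\[
\sum_{i=1}^{n-3}\bigl((n-1)-(i+2)+1\bigr)=\sum_{i=1}^{n-3}(n-2-i)=\sum_{k=1}^{n-3}k=\frac{(n-2)(n-3)}{2}.
\]
Distinctness follows because $\{u_i,u_{i+1},u_j,u_{j+1}\}$ is a four-element set that determines $(i,j)$. A regular vertex has exactly four nearest sites, so distinct pairs produce vertices with different supporting sets. Those vertices are therefore distinct points.

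The only delicate step is making the choices uniform: the smallness needed for $\mathcal U$ and the genericity of Lemma~\ref{lem:lower-bound-genericity} must hold simultaneously for all $O(n^2)$ pairs. For fixed $n$ the set of pairs is finite. Each condition is open or generic within a small cube of parameters, so intersecting finitely many of them still leaves a nonempty choice. I expect this to be the main point to check; otherwise the argument is a direct assembly of the earlier results.
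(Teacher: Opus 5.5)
Your proposal is correct and follows the paper's proof: fix the parameters via Lemma~\ref{lem:lower-bound-genericity}, obtain an empty transverse supporting sphere for each admissible pair $(i,j)$, and count the pairs. The only difference is the distinctness step. You observe that the set of nearest sites at $c_{ij}$ is exactly $\{\ell_{u_i},\ell_{u_{i+1}},\ell_{u_j},\ell_{u_{j+1}}\}$, which determines $(i,j)$; the paper instead argues that distinct quartics give distinct spheres and that two empty spheres with a common center must have equal radii. Your version is slightly more direct and does not appeal to injectivity of the coefficient map.
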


\begin{proof}
Choose the parameters as in Lemma~\ref{lem:lower-bound-genericity}. Every pair $1 \le i \le n-3$ and $i+2 \le j \le n-1$ gives an empty transverse supporting sphere and hence a regular nearest vertex.

Different products $Q_{ij}$ have different root sets. Local injectivity of the coefficient map gives different sphere parameters. If two empty supporting spheres had the same center, their radii would both equal the minimum distance from that center to the input lines and would therefore coincide. Hence their centers are distinct. Finally,
\begin{equation*}
    \sum_{i=1}^{n-3} \#\{j:i+2\leq j\leq n-1\} = \sum_{i=1}^{n-3}(n-i-2) = \frac{(n-2)(n-3)}{2}.
\end{equation*}
\end{proof}

\begin{corollary}\label{cor:tight-nearest-order}
For $n \ge 4$, the worst-case combinatorial complexity of the nearest Voronoi diagram, over all sets $L$ of $n$ lines contained in a single ruling of a smooth regulus and satisfying Assumption~\ref{ass:gp}, is
\[
    \max_L \operatorname{comp}(\NVD(L))=\Theta(n^2).
\]
\end{corollary}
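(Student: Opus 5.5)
The plan is to sandwich $\max_L \operatorname{comp}(\NVD(L))$ between two quadratic bounds already established. The upper bound comes from Theorem~\ref{thm:quadratic-upper-bounds} and the lower bound from Theorem~\ref{thm:quadratic-nearest-lower}. The only care needed is that both bounds refer to the same class of inputs and to the same notion of complexity.

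\emph{Upper bound.} Let $L$ be any set of $n\ge4$ lines in one ruling $\Gamma$ of a smooth doubly ruled real quadric, satisfying Assumption~\ref{ass:gp}. Theorem~\ref{thm:quadratic-upper-bounds} gives $\operatorname{comp}(\NVD(L))\le 16n^2-42n-5$, and this bound does not depend on $L$. Taking the supremum over all admissible $L$ gives $\max_L \operatorname{comp}(\NVD(L)) = O(n^2)$. The maximum exists because the complexity is an integer-valued quantity that is bounded above.

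\emph{Lower bound.} For each $n\ge4$, Theorem~\ref{thm:quadratic-nearest-lower} produces a set $L_n$ of $n$ lines. These lie in one ruling of the fixed quadric $Q_*$, which is a smooth one-sheeted hyperboloid and hence a smooth regulus. Lemma~\ref{lem:lower-bound-genericity} guarantees that $L_n$ satisfies Assumption~\ref{ass:gp}. Its nearest diagram has $V_N\ge (n-2)(n-3)/2$ vertices.

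To translate this into complexity, I would use the identity \eqref{eq:nukvd-complexity}, namely $\operatorname{comp}(\NVD(L_n))=4V_N+6n-5$. This identity is valid because $L_n$ satisfies Assumption~\ref{ass:gp}. It gives $\operatorname{comp}(\NVD(L_n))\ge 2(n-2)(n-3)+6n-5=\Omega(n^2)$. Even without the identity, the complexity counts vertices, so $\operatorname{comp}\ge V_N$ already suffices.

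Combining the two bounds gives $\Theta(n^2)$. No step is a real obstacle. The one point to verify is that the vertices counted in Theorem~\ref{thm:quadratic-nearest-lower} are genuine vertices of $\NVD(L_n)$: they are regular with empty supporting spheres, so they are vertices of the diagram and are not spurious equidistant points.
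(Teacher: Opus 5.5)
Your proposal is correct and follows the paper's proof: the upper bound is read off from Theorem~\ref{thm:quadratic-upper-bounds}, and the lower bound comes from the construction of Theorem~\ref{thm:quadratic-nearest-lower}, using that each of its regular nearest vertices is a distinct feature counted in $\operatorname{comp}(\NVD(L))$. Your extra remarks, that the construction lies in the admissible class and that $\operatorname{comp}\ge V_N$ already suffices, only make explicit what the paper's two-line proof leaves implicit.
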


\begin{proof}
The upper bound is Theorem~\ref{thm:quadratic-upper-bounds}. The lower bound follows from Theorem~\ref{thm:quadratic-nearest-lower}, since every regular vertex is a distinct combinatorial feature.
\end{proof}


\section{Enumerating Voronoi Vertices}\label{sec:vertex-enumeration}
The cyclic-support characterization yields an algorithm for enumerating the finite nearest and farthest Voronoi vertices that runs in $O(n^2)$ time in the real-RAM model. We generate the possible supporting quadruples combinatorially and certify the resulting algebraic candidates using their contact quartics.

Throughout this section, we use the real-RAM model augmented with constant-degree algebraic primitives in computational geometry. Arithmetic operations and comparisons on real numbers take constant time. In addition, solving a constant-size system of bounded-degree polynomial equations, representing its isolated real solutions, and determining the signs of bounded-degree polynomials at those solutions are treated as constant-time operations. All numbers of variables, equations, and degrees occurring below are absolute constants independent of $n$. Further details are given in
Appendix~\ref{app:machine-model}.

\subsection{Input representation and cyclic order}
We assume that the input consists of

\begin{enumerate}
    \item a reduced rational Pl\"ucker parametrization $\gamma:\mathbb P^1\longrightarrow\Gamma$ of the ruling,
    \item the parameter $\tau_i\in\mathbb P^1(\mathbb R)$ corresponding to each line $\ell_i$.
\end{enumerate}

Choose a projective parameter value not occupied by a site and use it as the point at infinity. The site parameters then lie in a single affine chart and can be sorted as real numbers. Restoring the wrap-around gap gives their cyclic order on $\mathbb P^1(\mathbb R)$. This requires $O(n\log n)$ time in the real-RAM model.

If only the Pl\"ucker coordinates of the lines are given, together with the promise that they belong to one ruling conic, then the conic, a rational parametrization, and the site parameters can be recovered in $O(n)$ time in the real-RAM model; see
Appendix~\ref{app:recovering-conic}.

\subsection{Enumeration algorithm}
Let $g_i=(\tau_i,\tau_{i+1})$ be the cyclic gaps for $i \in \mathbb Z/n\mathbb Z$. Enumerate all unordered pairs $\{g_i,g_j\}$ of vertex-disjoint gaps. For each such pair, let $\ell_a,\ell_b,\ell_c,\ell_d$ be its four endpoint lines, and write
\[
    \ell_r=(v_r,m_r), \qquad q_r(x)=\|x\times v_r-m_r\|^2, \qquad s_r=\|v_r\|^2.
\]
The common equidistant centers of the four lines satisfy
\begin{equation}\label{eq:enumeration-system}
    s_aq_b(x)-s_bq_a(x)=0,~s_aq_c(x)-s_cq_a(x)=0,~\text{and}~s_aq_d(x)-s_dq_a(x)=0.
\end{equation}
This is a system of three quadratic equations in the three coordinates of $x$.

Compute its isolated real solutions. By Lemma~\ref{lem:eight-center-bound}, there are at most eight. For each such solution, set $R=d(x,\ell_a)^2={q_a(x)}/{s_a}$ and form the contact quartic 
\[
H_{x,R}(s,t) = \Phi_{x,R} \bigl(v(s,t),m(s,t)\bigr).
\]
Reject the candidate unless $H_{x,R}$ is nonzero, its four support parameters are simple roots, and the three bisectors in Equation~\eqref{eq:enumeration-system} meet transversely at $x$.

The four support roots divide $\mathbb P^1(\mathbb R)$ into four open arcs, two of which are the selected gaps. Determine the sign of $H_{x,R}$ on each arc.

\begin{itemize}
    \item If $H_{x,R}$ is negative on the two selected gaps and positive on the other two arcs, report $x$ as a nearest Voronoi vertex.
    \item If $H_{x,R}$ is positive on the two selected gaps and negative on the other two arcs, report $x$ as a farthest Voronoi vertex.
\end{itemize}

Because the four support parameters exhaust the roots of the nonzero quartic, one sign determination per arc suffices. Thus each candidate is certified in $O(1)$ time in the real-RAM model, without scanning all $n$ site parameters.

\subsection{Correctness}
\begin{theorem}\label{thm:enumeration-correctness}
Under Assumption~\ref{ass:gp}, the algorithm reports exactly the finite vertices of $\NVD(L)$ and $\FVD(L)$, without duplicates within either diagram.
\end{theorem}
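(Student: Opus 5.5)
The correctness theorem has three parts: soundness, completeness, and absence of duplicates within each diagram. All three reduce to the contact-quartic sign pattern of Theorem~\ref{thm:degree-four} and the cyclic-support characterization of Lemma~\ref{lem:cyclic-support}. Throughout I use the paper's convention that a finite Voronoi vertex is a regular vertex: exactly four sites support it, and the three bisectors meet transversely. Under Assumption~\ref{ass:gp} these are precisely the points the vertex counts $V_N$ and $V_F$ refer to.

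\emph{Completeness.} Let $x$ be a finite vertex of $\NVD(L)$, with supporting lines $\ell_a,\ell_b,\ell_c,\ell_d$ and squared radius $R$.
\begin{enumerate}
\item By Lemma~\ref{lem:cyclic-support}(1), the four support parameters are the endpoints of the two negative arcs of $H_{x,R}$, and these arcs are two vertex-disjoint cyclic gaps $g_i,g_j$. The algorithm therefore enumerates this unordered pair.
\item Equidistance from the four lines means $x$ solves system~\eqref{eq:enumeration-system}.
\item Transversality of the bisectors makes $x$ a simple, hence isolated, real solution, so it appears among the computed solutions.
\item Assumption~\ref{ass:gp}(5) gives $H_{x,R}\not\equiv0$, and Theorem~\ref{thm:degree-four} gives four simple roots, so $x$ passes the rejection tests.
\item The signs alternate, so $H_{x,R}$ is negative on $g_i,g_j$ and positive on the other two arcs, and $x$ is reported as nearest.
\end{enumerate}
The farthest case is identical, using Lemma~\ref{lem:cyclic-support}(2) and the positive arcs.

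\emph{Soundness.} Suppose $x$ is reported as nearest for the gap pair $\{g_i,g_j\}$. The tests ensure $H_{x,R}$ is nonzero with simple roots exactly at the four endpoint parameters. Since these four roots exhaust the roots of a nonzero quartic, $H_{x,R}$ has constant sign on each of the four open arcs. Every other site parameter lies in one of the two non-selected arcs, because the selected arcs are gaps and contain no sites, so $H_{x,R}(\tau_k)>0$ at every other site. By \eqref{eq:positive-contact}, every other line misses the closed ball $\overline B(x,\sqrt R)$. Hence $d(x,\ell_k)^2>R$ for non-support sites and $=R$ for the four support sites, which places $x$ in the closure of exactly those four nearest cells. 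Together with transversality of the bisectors, this makes $x$ a regular vertex of $\NVD(L)$. For a farthest report, negativity off the selected gaps gives $d(x,\ell_k)^2<R$ by \eqref{eq:negative-contact}, so the four support lines are the farthest ones and $x$ is a regular vertex of $\FVD(L)$.

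\emph{No duplicates.} A reported nearest vertex $x$ determines $R$ as its minimum distance to $L$. It therefore determines its four support sites, namely the lines at distance exactly $\sqrt R$, which by Assumption~\ref{ass:gp}(3) are exactly four. The negative arcs of $H_{x,R}$ then determine the gap pair uniquely. So $x$ can be produced only from that one pair, and within that pair only once, as a single solution of the system. The same argument, with the maximum distance in place of the minimum, handles farthest vertices.

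The main obstacle is the soundness step, specifically the claim that having the right signs on the four arcs certifies the global emptiness or containment condition at every site without checking each site. This rests on the root-exhaustion property of Theorem~\ref{thm:degree-four}. One must also make sure the identification of ``finite vertex'' with ``regular vertex'' is justified under Assumption~\ref{ass:gp}, and I would cite Papadopoulou and Wang for that step.
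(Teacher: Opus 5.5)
Your proposal is correct and follows the paper's own argument (Appendix~\ref{app:enumeration-correctness-details}) essentially step for step. Completeness comes from Lemma~\ref{lem:cyclic-support}, soundness from root exhaustion together with the contact-sign identity, and the absence of duplicates from the fact that the sign pattern of $H_{x,R}$ determines the gap pair. The identification of finite vertices with regular vertices that you flag can be obtained directly from Assumption~\ref{ass:gp}(2)--(3), without citing Papadopoulou and Wang: Appendix~\ref{app:regularity-transversality} shows that affine independence of the four contact points is equivalent to transversality of the three bisectors.
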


\begin{proof}
Completeness follows from Lemma~\ref{lem:cyclic-support}, since the two negative arcs of a nearest vertex, or the two positive arcs of a farthest vertex, form a pair of vertex-disjoint cyclic gaps examined by the algorithm. Conversely, the sign test and Equation~\eqref{eq:contact-sign} certify that the open supporting ball meets no site line in the nearest case and that the closed supporting ball meets every site line in the farthest case. The simplicity and transversality tests certify regularity. Finally, the sign pattern uniquely determines the supporting gap pair. Full details are given in Appendix~\ref{app:enumeration-correctness-details}.
\end{proof}

\subsection{Running time}
\begin{theorem}\label{thm:solver-call-complexity}
Under the real-RAM model with constant-degree algebraic primitives specified above, all finite vertices of $\NVD(L)$ and $\FVD(L)$ can be enumerated simultaneously in $O(n^2)$ time. More precisely, after $O(n\log n)$ time to determine the cyclic order, the algorithm solves exactly ${n(n-3)}/{2}$ systems of three quadratic equations in three variables, generates $O(n^2)$ isolated real-algebraic candidates, and performs $O(n^2)$ additional arithmetic operations, rank tests, and contact-quartic sign determinations. Excluding the output, the algorithm uses $O(n)$ working space when vertices are reported as soon as they are certified.
\end{theorem}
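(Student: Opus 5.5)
The plan is to account for the cost of each phase of the algorithm of Section~\ref{sec:vertex-enumeration} separately and then sum them. Correctness is already settled by Theorem~\ref{thm:enumeration-correctness}, so only time and space remain.

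\textbf{Preprocessing.} Choosing an unoccupied parameter value as the point at infinity takes $O(n)$ time: any value outside the finite set of $n$ site parameters works, for instance one larger than the maximum absolute value in a fixed chart. Sorting the parameters takes $O(n\log n)$ time, and restoring the wrap-around gap takes $O(1)$. If only Plücker coordinates are given, the recovery step of Appendix~\ref{app:recovering-conic} adds $O(n)$.

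\textbf{Main loop and candidate count.} The pairs of vertex-disjoint gaps can be listed by two nested loops over $i\in\mathbb Z/n\mathbb Z$ and over the $j$ with $g_j$ not adjacent to $g_i$, with $i<j$ to avoid repetition. By Lemma~\ref{lem:gap-pair-count} this gives exactly $n(n-3)/2$ pairs, each produced in $O(1)$ time. Each pair spawns one instance of System~\eqref{eq:enumeration-system}, which has three equations, three unknowns, degree two, and coefficients computed from the four endpoint lines in $O(1)$ arithmetic operations. By the real-RAM convention this solve costs $O(1)$. By Lemma~\ref{lem:eight-center-bound}, at most eight isolated real solutions result, so there are $O(n^2)$ candidates in total.

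\textbf{Per-candidate certification.} For each candidate, the following steps each cost $O(1)$:
\begin{itemize}
\item computing $R=q_a(x)/s_a$;
\item forming the coefficients of $H_{x,R}$ by substituting the fixed quadratic parametrization into the quadratic form $\Phi_{x,R}$, a bounded-size polynomial computation;
\item testing that $H_{x,R}\not\equiv0$ and that the four roots are simple, by evaluating $H_{x,R}$ and its derivative at the four known parameters;
\item testing transversality, via a $3\times3$ Jacobian rank test at $x$;
\item determining the sign of $H_{x,R}$ on the four arcs, one sample point per arc, for example a midpoint in an affine chart.
\end{itemize}
The key point is that Theorem~\ref{thm:degree-four} makes the four support parameters exhaust the roots of $H_{x,R}$. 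Hence four sign evaluations replace a scan over all $n$ sites, and the per-candidate cost is independent of $n$. Summing, the loop costs $O(n^2)$, and the total is $O(n\log n)+O(n^2)=O(n^2)$.

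\textbf{Space.} Apart from the output, the algorithm stores the sorted parameter array, $O(n)$ line coordinates, the $O(1)$-size parametrization, and, at any moment, $O(1)$ working data for the current pair and its at most eight candidates. Each certified vertex is reported immediately and discarded, so the working space is $O(n)$.

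\textbf{Main obstacle.} The most delicate point is ensuring that every algebraic primitive is genuinely of bounded size. This covers the candidates being real algebraic numbers, given as constant-degree univariate representations, and the sign of $H_{x,R}$ being evaluated at them. I would handle this by noting that all degrees and variable counts are absolute constants, so the cited bounded algebraic sampling and univariate sign determination results \cite[Algorithms~12.17 and~10.13]{basu2006algorithms} give $O(1)$ cost per call. This matches the convention of Appendix~\ref{app:machine-model}.
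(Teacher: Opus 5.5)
Your proposal is correct and matches the paper's proof. Both count exactly $n(n-3)/2$ gap pairs by Lemma~\ref{lem:gap-pair-count} and at most eight candidates per pair by Lemma~\ref{lem:eight-center-bound}; both charge $O(1)$ for solving and certification per candidate under the real-RAM convention, with one sign evaluation per arc because the four support roots exhaust the quartic's roots; and both obtain $O(n)$ working space from nested loops with immediate reporting. Your version is more explicit about the individual primitives; you should also state that one candidate set is checked against both sign patterns, since this is what makes the nearest and farthest enumeration simultaneous, and that on the wrap-around arc you sample a point such as the chart's point at infinity rather than a midpoint.
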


\begin{proof}
Lemma~\ref{lem:gap-pair-count} gives exactly $n(n-3)/2$ pairs of vertex-disjoint cyclic gaps. Each corresponding quadratic system has constant description complexity and at most eight isolated real solutions. Under the stated real-RAM convention, solving each system and certifying each of its candidates take $O(1)$ time. The same candidates are classified for both diagrams. Hence the total enumeration time is $O(n^2)$, which dominates the $O(n\log n)$ sorting time. The storage claim is proved in Appendix~\ref{app:enumeration-complexity-details}.
\end{proof}

\begin{corollary}\label{cor:enumeration-output-optimality}
The $O(n^2)$-time nearest-vertex enumeration algorithm is worst-case optimal up to a constant factor in the real-RAM model.
\end{corollary}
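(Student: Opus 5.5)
The argument pairs a matching output-size lower bound with the upper bound of Theorem~\ref{thm:solver-call-complexity}. By that theorem, all finite nearest vertices are enumerated in $O(n^2)$ time in the stated real-RAM model. It remains to show that no algorithm can do asymptotically better in the worst case. I will use the trivial output-size argument: any algorithm that enumerates the vertices must spend at least one operation per reported vertex. Each vertex is written as its own constant-degree real univariate representation, and distinct vertices need distinct output records.

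The lower bound on the output comes from Theorem~\ref{thm:quadratic-nearest-lower}. For every $n\ge 4$, it gives a general-position input of $n$ lines in one ruling of $Q_*$ whose nearest Voronoi diagram has at least $(n-2)(n-3)/2=\Omega(n^2)$ distinct regular vertices. By Theorem~\ref{thm:enumeration-correctness}, a correct enumeration procedure must report every finite vertex of $\NVD(L)$. On this family, any such procedure must therefore write $\Omega(n^2)$ output items, and so takes $\Omega(n^2)$ time in the real-RAM model.

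Combining this with the $O(n^2)$ bound, the worst-case running time is $\Theta(n^2)$. The algorithm is thus optimal up to a constant factor. This holds both when the reduced parametrization is given and, by Appendix~\ref{app:recovering-conic}, when only the sites are given, since recovering the parametrization costs only $O(n)$.

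The main point to state carefully is the model convention. Optimality is meant in the output-sensitive sense, with each reported vertex costing $\Omega(1)$. I will not claim a lower bound for merely counting the vertices, nor one that applies to the $O(n\log n)$ sorting step, which is dominated anyway. With that convention fixed, the argument needs no further computation.
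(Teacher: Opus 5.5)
Your proposal is correct and follows the paper's proof: the $(n-2)(n-3)/2$ distinct regular vertices from Theorem~\ref{thm:quadratic-nearest-lower} force $\Omega(n^2)$ output, which matches the $O(n^2)$ bound of Theorem~\ref{thm:solver-call-complexity}. One small correction: the requirement that an enumeration procedure report every vertex comes from the definition of the problem, not from Theorem~\ref{thm:enumeration-correctness}, which concerns only this particular algorithm; this does not affect the argument.
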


\begin{proof}
Theorem~\ref{thm:quadratic-nearest-lower} gives inputs with $\Omega(n^2)$ distinct nearest Voronoi vertices. Explicitly reporting these vertices requires $\Omega(n^2)$ time.
\end{proof}


\section{Conclusion}\label{sec:conclusion}
We have established quadratic combinatorial complexity bounds for the nearest and farthest Voronoi diagrams of lines contained in one ruling of a smooth regulus. Although these results do not improve Sharir's $O(n^{3+\varepsilon})$ upper bound for arbitrary line sites~\cite{Sharir1994}, they identify a geometrically invariant class, containing a continuum of line directions, for which the exact worst-case order is quadratic. For nearest diagrams, the $O(n^2)$ upper bound and the construction of instances with $\Omega(n^2)$ vertices together give a worst-case bound of $\Theta(n^2)$, whereas the $n^2-n$ cells of every general-position farthest diagram give a quadratic lower bound for every admissible input. The decisive property is that restricting the sphere-contact equation to the ruling conic produces a quartic, so four supporting lines exhaust its root budget and sign alternation forces two complete arcs of the parameter circle to be site-free, thereby reducing the number of possible support quadruples from $\Theta(n^4)$ to $\Theta(n^2)$.

An important direction for future research is to identify natural classes of $n$ lines in $\mathbb R^3$ for which the nearest Voronoi diagram has worst-case combinatorial complexity $\Theta(n^2)$. The present work establishes such a bound when the lines belong to one ruling of a smooth regulus, but it remains to determine which geometric or algebraic properties of a line family force a quadratic upper bound while allowing a quadratic lower bound. Developing such criteria could extend the result to broader structured families of lines and may provide a path toward improving the general $O(n^{3+\varepsilon})$ upper bound.


\section*{Acknowledgements}

The authors thank Kristian Ranestad for an insightful and stimulating
discussion at the AGSTA Conference on Tensors and Related Topics.  The
discussion helped motivate the algebraic-geometric viewpoint developed
in this paper.

\bibliographystyle{plainurl}
\bibliography{references}

@article{AgarwalAronovSharir1997,
  author  = {Agarwal, Pankaj K. and Aronov, Boris and Sharir, Micha},
  title   = {Computing Envelopes in Four Dimensions with Applications},
  journal = {SIAM Journal on Computing},
  volume  = {26},
  number  = {6},
  pages   = {1714--1732},
  year    = {1997},
  doi     = {10.1137/S0097539794265724},
  url     = {https://doi.org/10.1137/S0097539794265724}
}

@article{Aronov2002,
  author  = {Aronov, Boris},
  title   = {A Lower Bound on {Voronoi} Diagram Complexity},
  journal = {Information Processing Letters},
  volume  = {83},
  number  = {4},
  pages   = {183--185},
  year    = {2002},
  doi     = {10.1016/S0020-0190(01)00336-2},
  url     = {https://doi.org/10.1016/S0020-0190(01)00336-2}
}

@article{BarequetPapadopoulouSuderland2024,
  author  = {Barequet, Gill and Papadopoulou, Evanthia and Suderland, Martin},
  title   = {Unbounded Regions of High-Order {Voronoi} Diagrams of Lines and Line Segments in Higher Dimensions},
  journal = {Discrete \& Computational Geometry},
  volume  = {72},
  number  = {3},
  pages   = {1304--1332},
  year    = {2024},
  doi     = {10.1007/s00454-023-00492-2},
  url     = {https://doi.org/10.1007/s00454-023-00492-2}
}

@article{CifuentesEtAl2022,
  author  = {Cifuentes, Diego and Ranestad, Kristian and Sturmfels, Bernd and Weinstein, Madeleine},
  title   = {{Voronoi} Cells of Varieties},
  journal = {Journal of Symbolic Computation},
  volume  = {109},
  pages   = {351--366},
  year    = {2022},
  doi     = {10.1016/j.jsc.2020.07.009},
  url     = {https://doi.org/10.1016/j.jsc.2020.07.009}
}

@article{EverettEtAl2009,
  author  = {Everett, Hazel and Lazard, Daniel and Lazard, Sylvain and Safey El Din, Mohab},
  title   = {The {Voronoi} Diagram of Three Lines},
  journal = {Discrete \& Computational Geometry},
  volume  = {42},
  number  = {1},
  pages   = {94--130},
  year    = {2009},
  doi     = {10.1007/s00454-009-9173-3},
  url     = {https://doi.org/10.1007/s00454-009-9173-3}
}

@misc{Glisse2021,
  author        = {Glisse, Marc},
  title         = {Lower Bound on the {Voronoi} Diagram of Lines in {$\mathbb{R}^d$}},
  year          = {2021},
  eprint        = {2103.17251},
  archivePrefix = {arXiv},
  primaryClass  = {cs.CG},
  doi           = {10.48550/arXiv.2103.17251},
  url           = {https://arxiv.org/abs/2103.17251}
}

@article{GoodmanEtAl2006,
  author  = {Goodman, Jacob E. and Holmsen, Andreas and Pollack, Richard and Ranestad, Kristian and Sottile, Frank},
  title   = {Cremona Convexity, Frame Convexity and a Theorem of {Santal\'o}},
  journal = {Advances in Geometry},
  volume  = {6},
  number  = {2},
  pages   = {301--321},
  year    = {2006},
  doi     = {10.1515/ADVGEOM.2006.018},
  url     = {https://doi.org/10.1515/ADVGEOM.2006.018}
}

@book{Harris1992,
  author    = {Harris, Joe},
  title     = {Algebraic Geometry: A First Course},
  series    = {Graduate Texts in Mathematics},
  volume    = {133},
  publisher = {Springer},
  address   = {New York},
  year      = {1992},
  doi       = {10.1007/978-1-4757-2189-8},
  url       = {https://doi.org/10.1007/978-1-4757-2189-8}
}

@article{KoltunSharir2003,
  author  = {Koltun, Vladlen and Sharir, Micha},
  title   = {{3-Dimensional} Euclidean {Voronoi} Diagrams of Lines with a Fixed Number of Orientations},
  journal = {SIAM Journal on Computing},
  volume  = {32},
  number  = {3},
  pages   = {616--642},
  year    = {2003},
  doi     = {10.1137/S0097539702408387},
  url     = {https://doi.org/10.1137/S0097539702408387}
}

@inproceedings{PapadopoulouWang2026,
  author    = {Papadopoulou, Evanthia and Wang, Zeyu},
  title     = {The {Voronoi} Diagram of Four Lines in {$\mathbb{R}^3$}},
  booktitle = {42nd International Symposium on Computational Geometry (SoCG 2026)},
  series    = {Leibniz International Proceedings in Informatics (LIPIcs)},
  volume    = {367},
  pages     = {84:1--84:17},
  year      = {2026},
  editor    = {Ahn, Hee-Kap and Hoffmann, Michael and Nayyeri, Amir},
  publisher = {Schloss Dagstuhl -- Leibniz-Zentrum f{\"u}r Informatik},
  address   = {Dagstuhl, Germany},
  doi       = {10.4230/LIPIcs.SoCG.2026.84},
  url       = {https://drops.dagstuhl.de/entities/document/10.4230/LIPIcs.SoCG.2026.84}
}

@book{PottmannWallner2001,
  author    = {Pottmann, Helmut and Wallner, Johannes},
  title     = {Computational Line Geometry},
  series    = {Mathematics and Visualization},
  publisher = {Springer},
  address   = {Berlin, Heidelberg},
  year      = {2001},
  doi       = {10.1007/978-3-642-04018-4},
  url       = {https://doi.org/10.1007/978-3-642-04018-4}
}

@article{Sharir1994,
  author  = {Sharir, Micha},
  title   = {Almost Tight Upper Bounds for Lower Envelopes in Higher Dimensions},
  journal = {Discrete \& Computational Geometry},
  volume  = {12},
  number  = {3},
  pages   = {327--345},
  year    = {1994},
  doi     = {10.1007/BF02574384},
  url     = {https://doi.org/10.1007/BF02574384}
}

@article{WeismanChewKedem2004,
  author  = {Weisman, Amit and Chew, L. Paul and Kedem, Klara},
  title   = {{Voronoi} Diagrams of Moving Points in the Plane and of Lines in Space: Tight Bounds for Simple Configurations},
  journal = {Information Processing Letters},
  volume  = {92},
  number  = {5},
  pages   = {245--251},
  year    = {2004},
  doi     = {10.1016/j.ipl.2004.08.004},
  url     = {https://doi.org/10.1016/j.ipl.2004.08.004}
}

@article{Aurenhammer1991,
  author  = {Franz Aurenhammer},
  title   = {Voronoi Diagrams---A Survey of a Fundamental Geometric
             Data Structure},
  journal = {ACM Computing Surveys},
  volume  = {23},
  number  = {3},
  pages   = {345--405},
  year    = {1991},
  doi     = {10.1145/116873.116880}
}

@book{Tu2011,
  author    = {Tu, Loring W.},
  title     = {An Introduction to Manifolds},
  edition   = {2},
  series    = {Universitext},
  publisher = {Springer},
  address   = {New York},
  year      = {2011},
  doi       = {10.1007/978-1-4419-7400-6}
}

@book{basu2006algorithms,
  title={Algorithms in real algebraic geometry},
  author={Basu, Saugata and Pollack, Richard and Roy, Marie-Fran{\c{c}}oise},
  year={2006},
  publisher={Springer}
}

\appendix

\section{Explicit Rulings of Standard Doubly Ruled Quadrics}\label{app:explicit-rulings}
\subsection{One-sheeted hyperboloids}
For $a,b,c>0$, consider
\[
    \mathcal H_{a,b,c} := \left\{(x,y,z) \in \mathbb{R}^3: \frac{x^2}{a^2}+\frac{y^2}{b^2}-\frac{z^2}{c^2}=1 \right\}.
\]
For $\theta \in \mathbb{R}/(2\pi\mathbb{Z})$ and $t \in \mathbb{R}$, define
\begin{equation*}
\begin{aligned}
    \ell_\theta^+(t) &:= \bigl(a(\cos\theta-t\sin\theta), b(\sin\theta+t\cos\theta), ct \bigr),\\
    \ell_\theta^-(t) &:= \bigl(a(\cos\theta+t\sin\theta), b(\sin\theta-t\cos\theta), ct \bigr).
\end{aligned}
\end{equation*}

\begin{lemma}
The families 
\[
\mathcal R^+ = \{\ell_\theta^+: \theta\in\mathbb{R}/(2\pi\mathbb{Z})\} \qquad \text{and} \qquad \mathcal R^- = \{\ell_\theta^-: \theta\in\mathbb{R}/(2\pi\mathbb{Z})\}
\]
are the two rulings of $\mathcal H_{a,b,c}$. Distinct affine lines in the same ruling are skew.
\end{lemma}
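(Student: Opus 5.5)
The plan is to verify the claim by direct substitution into the hyperboloid equation, followed by a short linear-algebra argument for skewness.

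First, the incidence. For $\ell_\theta^{\pm}(t)$ we compute
\[
\frac{x^2}{a^2}+\frac{y^2}{b^2}-\frac{z^2}{c^2}=(\cos\theta\mp t\sin\theta)^2+(\sin\theta\pm t\cos\theta)^2-t^2 .
\]
The cross terms $\mp2t\sin\theta\cos\theta$ and $\pm2t\sin\theta\cos\theta$ cancel. What remains is $1+t^2-t^2=1$, so every line of $\mathcal R^+$ and of $\mathcal R^-$ lies on $\mathcal H_{a,b,c}$.

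Second, coverage and the ruling structure. Take any point $(x,y,z)\in\mathcal H_{a,b,c}$ and set $t=z/c$. The point $(x/a,y/b)$ has squared norm $1+t^2$. The map $(\cos\theta,\sin\theta)\mapsto(\cos\theta\mp t\sin\theta,\ \sin\theta\pm t\cos\theta)$ is $\sqrt{1+t^2}$ times a rotation. Hence there is exactly one $\theta$ with $\ell_\theta^+(t)=(x,y,z)$, and exactly one with $\ell^-_\theta(t)=(x,y,z)$. So each family covers the surface, and through each point passes exactly one line of each family.

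The two families are genuinely different. The direction of $\ell^+_\theta$ is $(-a\sin\theta,\,b\cos\theta,\,c)$ and that of $\ell^-_\theta$ is $(a\sin\theta,\,-b\cos\theta,\,c)$. These directions are never parallel, because parallelism would force $\sin\theta=\cos\theta=0$. A smooth quadric contains exactly two one-parameter families of lines (Proposition~\ref{prop:ruling-conic} and the Segre form). Since $\mathcal R^\pm$ are two distinct continuous families of lines on the surface, they are its two rulings.

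Third, skewness. This is the only step with any content, and it is a $3\times3$ determinant computation. Take $\theta\neq\varphi$ in $\mathcal R^+$.
\begin{itemize}
\item The directions $(-a\sin\theta,b\cos\theta,c)$ and $(-a\sin\varphi,b\cos\varphi,c)$ are not parallel, since their last coordinates agree and $\theta\neq\varphi$. So the lines are not parallel.
\item The lines do not meet. A common point would have a common height $t=z/c$, and the uniqueness of $\theta$ for a given point and $t$ established in the second step then forces $\theta=\varphi$.
\end{itemize}
Thus distinct lines of $\mathcal R^+$ are skew, and the same argument applies to $\mathcal R^-$. One could also check non-coplanarity via the triple product of the base-point difference with the two directions. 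That product is proportional to $abc$ times a nonzero multiple of $1-\cos(\theta-\varphi)$, and the uniqueness argument above avoids computing it.
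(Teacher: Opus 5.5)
Your proof is correct and follows essentially the same route as the paper: your scaled-rotation uniqueness of $\theta$ at a fixed height is the real form of the paper's identity $X+iY=e^{i\theta}(1\pm iZ)$, and you obtain skewness exactly as the paper does, from the common-height argument together with the shared third coordinate $c$ of the direction vectors. Your extra paragraph separating $\mathcal R^+$ from $\mathcal R^-$ goes beyond what the paper writes, but non-parallelism alone does not carry it, since $\ell^+_\theta$ is parallel to $\ell^-_{\theta+\pi}$; what makes it work is that $\ell^+_\theta$ and $\ell^-_\theta$ are distinct lines through the common point $(a\cos\theta,b\sin\theta,0)$, combined with your uniqueness step.
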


\begin{proof}
Direct substitution for $\ell_\theta^+$ gives
\begin{equation*}
    \frac{x^2}{a^2} +\frac{y^2}{b^2} -\frac{z^2}{c^2} = (\cos\theta-t\sin\theta)^2 + (\sin\theta+t\cos\theta)^2-t^2 =1.
\end{equation*}
The mixed terms cancel. The same calculation with the opposite signs applies to $\ell_\theta^-$.

Set
\[
    X=\frac{x}{a}, \qquad Y=\frac{y}{b}, \qquad Z=\frac{z}{c}.
\]
Then $X^2+Y^2=1+Z^2$. Along $\ell_\theta^+$,
\[
    X+iY=e^{i\theta}(1+iZ),
\]
so every point of the hyperboloid determines a unique
\[
    e^{i\theta_+} = \frac{X+iY}{1+iZ}.
\]
Similarly, along $\ell_\theta^-$,
\[
    X+iY=e^{i\theta}(1-iZ), \qquad e^{i\theta_-} = \frac{X+iY}{1-iZ}.
\]
Thus every point belongs to exactly one line of each family.

Suppose that $\ell_\theta^+$ and $\ell_\varphi^+$ intersect. Equality of the $z$-coordinates forces the same value of $t$, and then $e^{i\theta}(1+it)=e^{i\varphi}(1+it)$. Since $1+it\neq0$, we obtain $\theta=\varphi\pmod{2\pi}$. Hence distinct lines in $\mathcal R^+$ are disjoint. Their direction vectors are $d_\theta^+ = (-a\sin\theta,b\cos\theta,c)$. Two such vectors can be parallel only when their proportionality factor is one, because their third coordinates are both $c \neq 0$. This forces $\theta=\varphi\pmod{2\pi}$. Thus distinct lines are also
nonparallel and hence skew. The same argument applies to $\mathcal R^-$.
\end{proof}

A line from one ruling and a line from the other meet in the projective closure. Indeed, set $\delta=\varphi-\theta$. If $\delta\not\equiv\pi\pmod{2\pi}$, then $\ell_\theta^+$ and $\ell_\varphi^-$ meet at the affine parameter $t=\tan({\delta}/{2})$. If $\delta\equiv\pi\pmod{2\pi}$, the two affine lines are parallel and their projective closures meet at their common point at infinity.

The fixed hyperboloid used in Section~\ref{sec:lower-bound} is
\[
    \mathcal H_{2,1,1} = \left\{(x,y,z): \frac{x^2}{4}+y^2-z^2=1 \right\}.
\]

\subsection{The hyperbolic paraboloid}
Consider
\[
    \mathcal P := \{(x,y,z)\in\mathbb{R}^3:z=x^2-y^2\}.
\]
Since $z=(x-y)(x+y)$, set $u=x-y$ and $v=x+y$. For $\alpha,\beta\in\mathbb{R}$, define
\begin{equation*}
    \lambda_\alpha(t) := \left(\frac{t+\alpha}{2}, \frac{t-\alpha}{2}, \alpha t \right), \qquad \text{and} \qquad \mu_\beta(t) := \left(\frac{t+\beta}{2}, \frac{\beta-t}{2}, \beta t \right).
\end{equation*}

\begin{lemma}
The families
\[
    \mathcal L=\{\lambda_\alpha:\alpha\in\mathbb{R}\}, \qquad \mathcal M=\{\mu_\beta:\beta\in\mathbb{R}\}
\]
are the two families of affine generators of $\mathcal P$. Distinct lines in the same family are skew, while a line from $\mathcal L$ and a line from $\mathcal M$ intersect in exactly one point.
\end{lemma}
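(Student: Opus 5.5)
The plan is direct substitution followed by the same kind of elementary incidence arguments used for the hyperboloid. First, to see that every $\lambda_\alpha$ and every $\mu_\beta$ lies on $\mathcal P$, I will pass to the coordinates $u=x-y$, $v=x+y$, in which the surface reads $z=uv$.

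Along $\lambda_\alpha(t)$ we have $u=\alpha$, $v=t$ and $z=\alpha t=uv$. Along $\mu_\beta(t)$ we have $u=t$, $v=\beta$ and $z=\beta t=uv$. So $\mathcal L$ is the family of lines $\{u=\alpha,\ z=\alpha v\}$ and $\mathcal M$ is the family $\{v=\beta,\ z=\beta u\}$. A point $(u,v,uv)$ of $\mathcal P$ lies on exactly one $\lambda_\alpha$, namely the one with $\alpha=u$ and $t=v$. It also lies on exactly one $\mu_\beta$, namely the one with $\beta=v$ and $t=u$. Hence each family covers $\mathcal P$ exactly once. To justify calling them \emph{the} two families of generators, I will show that any affine line contained in $\mathcal P$ belongs to one of them. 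Write such a line as $(u_0+at,\ v_0+bt,\ z_0+ct)$. The condition $z=uv$ forces the quadratic coefficient $ab$ to vanish. If $a=0$, then $u$ is constant and the line is some $\lambda_\alpha$. If $b=0$, then $v$ is constant and the line is some $\mu_\beta$.

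Next comes skewness within a family. Two lines $\lambda_\alpha$ and $\lambda_{\alpha'}$ with $\alpha\neq\alpha'$ lie in the distinct parallel planes $u=\alpha$ and $u=\alpha'$, so they are disjoint. Their direction vectors $(1/2,1/2,\alpha)$ and $(1/2,1/2,\alpha')$ are not proportional, because the first two coordinates fix the proportionality factor at $1$. The lines are therefore disjoint and nonparallel, hence skew. The same argument applies to $\mathcal M$ using the planes $v=\beta$.

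For a line from each family, I solve for a common point. The line $\lambda_\alpha$ has $u=\alpha$ and the line $\mu_\beta$ has $v=\beta$, so the only candidate is the point with $(u,v)=(\alpha,\beta)$ and $z=\alpha\beta$. This point lies on both lines, so the intersection is exactly one point. The only step needing any care is the completeness claim that $\mathcal P$ has no other lines, which is handled by the vanishing of the $t^2$ coefficient $ab$. The rest is routine.
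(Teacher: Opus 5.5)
Your proof is correct and follows the paper's argument essentially step for step. Both use the substitution $u=x-y$, $v=x+y$, the disjoint parallel planes $u=\alpha$ (resp.\ $v=\beta$) together with non-proportional direction vectors for skewness, and the solution of $u=\alpha$, $v=\beta$ for the unique intersection point. Your extra check is a worthwhile addition: any line $(u_0+at,\,v_0+bt,\,z_0+ct)$ on $\mathcal P$ must have $ab=0$, so it lies in $\mathcal L$ or $\mathcal M$. The paper only shows that each family covers $\mathcal P$, and leaves implicit the claim that these are \emph{the} two families of generators.
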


\begin{proof}
Along $\lambda_\alpha$, $x-y=\alpha$, $x+y=t$, and therefore $x^2-y^2=(x-y)(x+y)=\alpha t=z$. Along $\mu_\beta$, $x-y=t$, $x+y=\beta$, so again $x^2-y^2=\beta t=z$. Every point $(x,y,z)\in\mathcal P$ belongs to $\lambda_\alpha$ with $\alpha=x-y$, and to $\mu_\beta$ with $\beta=x+y$. If $\alpha\neq\alpha'$, then $\lambda_\alpha$ and $\lambda_{\alpha'}$ lie in the disjoint parallel planes $x-y=\alpha$ and $x-y=\alpha'$. Their scaled direction vectors are $(1,1,2\alpha)$ and $(1,1,2\alpha')$, which are parallel only if $\alpha=\alpha'$. Thus distinct lines in $\mathcal L$ are skew. The same argument applies to $\mathcal M$, whose scaled direction vectors are $(1,-1,2\beta)$.

Finally, $\lambda_\alpha$ and $\mu_\beta$ intersect where $x-y=\alpha$ and $x+y=\beta$. Their unique intersection point is
\[
    \lambda_\alpha\cap\mu_\beta = \left\{ \left( \frac{\alpha+\beta}{2}, \frac{\beta-\alpha}{2}, \alpha\beta \right) \right\}.
\]
\end{proof}

The projective closure $\overline{\mathcal P}\subset\mathbb P^3$ contains two additional generator lines in the plane at infinity, one from each ruling. These lines have no affine points, so the families above consist precisely of the generators that are affine lines in $\mathbb R^3$ and can serve as sites for the Euclidean Voronoi diagram.


\section{Technical Details for the Lower Bound}\label{app:lower-bound-details}
\subsection{Expansion of the contact quartic}\label{app:contact-expansion}
For $\lambda>1$, consider
\[
    Q_\lambda: \frac{x^2}{\lambda^2}+y^2-z^2=1
\]
and the scaled vector--moment parametrization
\begin{equation*}
    V_\lambda(u) = (-2\lambda u,\,1-u^2,\,1+u^2), \qquad \text{and} \qquad M_\lambda(u) = (2u,\,-\lambda(1-u^2),\,\lambda(1+u^2)).
\end{equation*}
For $c=(X,Y,Z)$, define
\[
    P^\lambda_{c,R}(u) := \|c\times V_\lambda(u)-M_\lambda(u)\|^2 - R\|V_\lambda(u)\|^2 = \sum_{k=0}^{4}p_k^\lambda(c,R)u^k.
\]
Direct expansion gives
\begin{align}
    p_0^\lambda &= (Y-Z)^2+2(X-\lambda)^2-2R, \label{eq:plambda0}\\
    p_1^\lambda &= 4\bigl( -(Y-Z)+\lambda(X-\lambda)(Y+Z) \bigr), \label{eq:plambda1}\\
    p_2^\lambda &= 4+(4\lambda^2+2)Y^2 +(4\lambda^2-2)Z^2 -4\lambda^2R, \label{eq:plambda2}\\
    p_3^\lambda &= 4\bigl( -(Y+Z)+\lambda(X+\lambda)(Z-Y) \bigr), \label{eq:plambda3}\\
    p_4^\lambda &= (Y+Z)^2+2(X+\lambda)^2-2R. \label{eq:plambda4}
\end{align}
At $c_\lambda = (\lambda+\lambda^{-1},0,0)$ and $R_\lambda=\lambda^{-2}$, these coefficients reduce to $P^\lambda_{c_\lambda,R_\lambda}(u) = 8(\lambda^2+1)u^4$. Setting $\lambda=2$ gives
\[
    c_*=\left(\frac52,0,0\right), \qquad R_*=\frac14, \qquad P_{c_*,R_*}(u)=40u^4.
\]

\subsection{The coefficient Jacobian}\label{app:coefficient-jacobian}
Define
\[
    \Psi_\lambda = \left(\frac{p_0^\lambda}{p_4^\lambda}, \frac{p_1^\lambda}{p_4^\lambda}, \frac{p_2^\lambda}{p_4^\lambda}, \frac{p_3^\lambda}{p_4^\lambda} \right).
\]
Writing $A=\lambda^2+1$, differentiation of Equations~\eqref{eq:plambda0}--\eqref{eq:plambda4} at $(c_\lambda,R_\lambda)$ gives
\begin{equation*}
    D\Psi_\lambda(c_\lambda,R_\lambda) =
    \begin{pmatrix}
        \dfrac{1}{2\lambda A} & 0 & 0 & -\dfrac{1}{4A}\\[6pt]
        0 & 0 & \dfrac{1}{A} & 0\\[6pt]
        0 & 0 & 0 & -\dfrac{\lambda^2}{2A}\\[6pt]
        0 & -1 & \dfrac{\lambda^2}{A} & 0
    \end{pmatrix}.
\end{equation*}
Its determinant is
\begin{equation*}
    \det D\Psi_\lambda(c_\lambda,R_\lambda) = \frac{\lambda}{4(\lambda^2+1)^3},
\end{equation*}
which is nonzero for every $\lambda>0$. For $\lambda=2$, this becomes
\[
    D\Psi(c_*,R_*) =
    \begin{pmatrix}
        \dfrac1{20} & 0 & 0 & -\dfrac1{20}\\[4pt]
        0 & 0 & \dfrac15 & 0\\[4pt]
        0 & 0 & 0 & -\dfrac25\\[4pt]
        0 & -1 & \dfrac45 & 0
    \end{pmatrix},
    \qquad \det D\Psi(c_*,R_*)=\frac1{250}.
\]
The real inverse function theorem now gives Corollary~\ref{cor:local-quartic-realization}.

\subsection{Regularity and transversality}\label{app:regularity-transversality}
Let $r_1,r_2,r_3,r_4$ be four distinct roots of a normalized contact quartic sufficiently close to $u^4$, and let $(c,R)$ be the corresponding sphere supplied by the local inverse of $\Psi$. Define
\[
    F_k(c,R) := d(c,\ell_{r_k})^2-R, \qquad k=1,\ldots,4.
\]
At a common zero of the four functions $F_k$, differentiation gives the factorization
\begin{equation}\label{eq:evaluation-jacobian-factorization}
\begin{aligned}
    D_{(c,R)}(F_1,F_2,F_3,F_4) &= \operatorname{diag} \left( \frac{p_4}{\|V(r_1)\|^2}, \frac{p_4}{\|V(r_2)\|^2}, \frac{p_4}{\|V(r_3)\|^2}, \frac{p_4}{\|V(r_4)\|^2} \right)\\
    &\qquad\cdot 
    \begin{pmatrix}
        1&r_1&r_1^2&r_1^3\\
        1&r_2&r_2^2&r_2^3\\
        1&r_3&r_3^2&r_3^3\\
        1&r_4&r_4^2&r_4^3
    \end{pmatrix}
    D\Psi(c,R).
\end{aligned}
\end{equation}
Indeed, at $u=r_k$ the normalized contact quartic vanishes, so the term arising from differentiation of $p_4$ disappears.

The first factor in Equation~\eqref{eq:evaluation-jacobian-factorization} is nonsingular because $p_4>0$ and $V(r_k)\neq0$. The middle factor is a Vandermonde matrix with determinant
\[
    \prod_{1\leq a<b\leq4}(r_b-r_a)\neq0.
\]
After shrinking the inverse-function neighborhood if necessary, $D\Psi(c,R)$ is nonsingular throughout that neighborhood. Hence the evaluation Jacobian is nonsingular.

We next relate this algebraic condition to Voronoi regularity. Let $y_k$ be the orthogonal projection of $c$ onto $\ell_{r_k}$, and write $f_k(c)=d(c,\ell_{r_k})^2$. Then $\nabla f_k(c)=2(c-y_k)$, so the rows of the tangency Jacobian are $\bigl(2(c-y_k)^{\mathsf T},-1\bigr)$ for $k=1,\ldots,4$. Subtracting the first row from the remaining three shows that this matrix is nonsingular exactly when $y_2-y_1$, $y_3-y_1$, and $y_4-y_1$ span $\mathbb R^3$. This is equivalent to affine independence of the four contact points. The same row subtraction eliminates the $R$-coordinate and gives full rank for the gradients of the three bisector equations $f_k(c)-f_1(c)=0$ for $k=2,3,4$. Thus the three bisectors meet transversely. In particular, whenever the common tangent sphere is empty with respect to the input lines,
its center is a regular nearest Voronoi vertex incident to the four corresponding nearest cells.

\subsection{Excluding tangency to the entire ruling}\label{app:no-global-tangency}

\begin{lemma}\label{lem:no-global-tangency}
For every $\lambda>1$, no real Euclidean sphere is tangent to every line of the chosen ruling of 
\[
    Q_\lambda:\frac{x^2}{\lambda^2}+y^2-z^2=1.
\]
In particular, this holds for $Q_*$ when $\lambda=2$.
\end{lemma}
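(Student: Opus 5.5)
The plan is to show directly that the five coefficients $p_0^\lambda,\dots,p_4^\lambda$ of Equations~\eqref{eq:plambda0}--\eqref{eq:plambda4} cannot vanish simultaneously for real $(X,Y,Z,R)$ when $\lambda>1$. A sphere tangent to every line of the ruling has $P^\lambda_{c,R}\equiv0$. Since the scaled parametrization $(V_\lambda,M_\lambda)$ covers every line of the ruling except possibly the one at $u=\infty$, and that missing line has leading coefficient $p_4^\lambda$, this is exactly the condition to be excluded. I will work only with the explicit polynomial system. No projective argument is needed.

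\emph{Step 1: two linear consequences.} First form $p_0^\lambda-p_4^\lambda$. The squares expand to give $-4YZ-8\lambda X=0$, that is, $YZ=-2\lambda X$. Next form $p_0^\lambda+p_4^\lambda$, which gives $R=\tfrac12(Y^2+Z^2)+X^2+\lambda^2$. Substituting this into $p_2^\lambda=0$ yields the single relation
\[
 4+(2\lambda^2+2)Y^2+(2\lambda^2-2)Z^2=4\lambda^2X^2+4\lambda^4 .
\]

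\emph{Step 2: the linear system from $p_1^\lambda,p_3^\lambda$.} For fixed $X$, the equations $p_1^\lambda=p_3^\lambda=0$ are linear and homogeneous in $(Y,Z)$. Write $a=\lambda(X-\lambda)$ and $b=\lambda(X+\lambda)$. The coefficient matrix is $\begin{pmatrix}a-1&a+1\\-(b+1)&b-1\end{pmatrix}$, and its determinant is $2(1+ab)=2\bigl(1+\lambda^2(X^2-\lambda^2)\bigr)$.

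\emph{Generic case.} If this determinant is nonzero, then $Y=Z=0$. The relation $YZ=-2\lambda X$ then forces $X=0$. The relation from Step 1 becomes $4=4\lambda^4$, which contradicts $\lambda>1$.

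\emph{Degenerate case.} If the determinant vanishes, then $\lambda^2X^2=\lambda^4-1$. In particular $X\neq0$, because $\lambda>1$. The kernel is the line $(Y,Z)=t\,(a+1,\,1-a)$. The constraint $YZ=-2\lambda X$ gives $t^2(1-a^2)=-2\lambda X$. The Step 1 relation becomes
\[
 (2\lambda^2+2)Y^2+(2\lambda^2-2)Z^2=8\lambda^4-8 .
\]
Its left side is a positive definite form in $(Y,Z)$ because $\lambda>1$.

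I then eliminate $t$ between these two equations and use $a=\lambda X-\lambda^2$ with $X=\pm\sqrt{\lambda^2-\lambda^{-2}}$. This should leave an identity in $\lambda$ alone that fails for every $\lambda>1$. The two sign choices of $X$ must be handled separately, because $YZ=-2\lambda X$ fixes the sign of $t^2(1-a^2)$. A sign obstruction is plausible here, for example $1-a^2$ having the wrong sign relative to $-X$. Failing that, I expect a polynomial identity in $\lambda$ whose only positive roots are at most $1$.

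This degenerate case is the main obstacle. The bookkeeping is error-prone, and it is the one place where the hypothesis $\lambda>1$ must be used beyond the trivial contradiction $4\neq4\lambda^4$. If the algebra becomes unwieldy, my fallback is to specialize to $\lambda=2$, which is all that $Q_*$ requires. There the degenerate $X$ is $\pm\sqrt{15}/2$, and the remaining checks are explicit numerical ones.
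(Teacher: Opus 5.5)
\textbf{Gap: the degenerate case is never actually proved.} Your Step~1 relations, the $2\times2$ system and its determinant $2\bigl(1+\lambda^2(X^2-\lambda^2)\bigr)$, and the generic case are all correct. The proof stops, however, exactly where the degenerate case $1+ab=0$ has to be settled. There you only write that an identity ``should'' appear, that a sign obstruction is ``plausible'', and what you ``expect''. Your fallback of specializing to $\lambda=2$ would at best establish the final sentence about $Q_*$, not the lemma for every $\lambda>1$. As written, the proposal does not prove the statement.

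\textbf{How to close it.} The sign obstruction you guessed is exactly what happens, and it takes one line. In the degenerate case $\lambda^2X^2=\lambda^4-1$, so with $a=\lambda(X-\lambda)$
\[
1-a^2 = 1-\lambda^2X^2+2\lambda^3X-\lambda^4 = 2\lambda^3X-2\lambda^2X^2 = 2\lambda^2X(\lambda-X).
\]
Substitute this into $t^2(1-a^2)=-2\lambda X$ and divide by $2\lambda X\neq0$. This gives $\lambda(\lambda-X)\,t^2=-1$. Since $X^2=\lambda^2-\lambda^{-2}<\lambda^2$, we have $\lambda-X>0$, so the left side is nonnegative, a contradiction for both signs of $X$. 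The positive-definite relation from Step~1 is not needed here; only the kernel and $YZ=-2\lambda X$ are used.

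\textbf{Comparison with the paper.} The paper avoids the determinant split altogether. Adding and subtracting $p_1^\lambda$ and $p_3^\lambda$ gives
\[
\lambda XZ=(\lambda^2+1)Y, \qquad \lambda XY=(\lambda^2-1)Z .
\]
Combined with $YZ=-2\lambda X$, the case $YZ\neq0$ yields $Y^2=-2(\lambda^2-1)<0$. The case $YZ=0$ forces $X=Y=Z=0$, after which $p_0^\lambda=0$ and $p_2^\lambda=0$ give the incompatible values $R=\lambda^2$ and $R=\lambda^{-2}$. Splitting on $YZ$ rather than on the determinant makes the contradiction immediate and never requires the explicit value of $X$.
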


\begin{proof}
Suppose that a sphere with center $(X,Y,Z)$ and squared radius $R$ were tangent to every generator. Then all five coefficients $p_k^\lambda$ would vanish. Sums and differences of Equations~\eqref{eq:plambda0}--\eqref{eq:plambda4} give
\begin{equation}\label{eq:no-global-relations}
    \lambda XZ=(\lambda^2+1)Y, \qquad \lambda XY=(\lambda^2-1)Z, \qquad \text{and} \qquad 2\lambda X+YZ=0.
\end{equation}
If $YZ \neq 0$, the second and third equations give $Y^2=-2(\lambda^2-1)$, which has no real solution for $\lambda>1$.

If $YZ=0$, Equation~\eqref{eq:no-global-relations} forces $X=Y=Z=0$. Then $p_0^\lambda=0$ gives $R=\lambda^2$, whereas $p_2^\lambda=0$ gives $R=\lambda^{-2}$, a contradiction.
\end{proof}

Since a nonzero contact quartic has degree four, Lemma~\ref{lem:no-global-tangency} also implies that no real sphere is tangent to five distinct lines of this ruling.

\subsection{Genericity of the site parameters}\label{app:generic-site-parameters}
Distinct lines in one ruling are automatically pairwise skew.  For the fixed ruling of $Q_*$, the direction vectors satisfy
\[
    \det \bigl[ V(u_a)\;V(u_b)\;V(u_c) \bigr] = 8 \prod_{\substack{r<s\\r,s\in\{a,b,c\}}} (u_s-u_r).
\]
Thus any three distinct direction vectors are linearly independent, so no three sites are parallel to a common plane.

To handle all four-site subsets simultaneously, complexify the coefficient map:
\[
    \Psi_{\mathbb C}: \Omega := \{(X,Y,Z,R)\in\mathbb C^4:p_4\neq0\} \longrightarrow \mathbb C^4.
\]
The nonzero Jacobian of $\Psi_{\mathbb C}$ at $(c_*,R_*)$ shows that this map is dominant. Since its source and
target both have dimension four, it is generically finite. Hence there is a proper algebraic subset $\mathcal B\subsetneq\mathbb C^4$ containing the critical values of $\Psi_{\mathbb C}$ and all values having a positive-dimensional fiber.

For four ordered distinct roots, the Vieta map
\[
    (r_1,r_2,r_3,r_4) \longmapsto \operatorname{coeff} \prod_{k=1}^{4}(u-r_k)
\]
has Jacobian determinant, up to sign,
\[
    \prod_{1\leq a<b\leq4}(r_b-r_a),
\]
and is therefore a local diffeomorphism. For each fixed index quadruple, the pullback of $\mathcal B$ is consequently a proper algebraic subset of the ambient site-parameter space. Its intersection with the ordered real parameter simplex has empty interior.

No Euclidean sphere is tangent to every line of the chosen ruling. Hence the contact quartic of a sphere tangent to four distinct sites is nonzero. It therefore has degree four, and its normalized coefficient vector is the Vieta coefficient vector of the four site parameters. Avoiding the preceding pullbacks ensures that every four-site tangent sphere is an isolated regular point of a finite fiber. By Section~\ref{app:regularity-transversality}, its four contact points are affinely independent and its three bisectors meet transversely.

It remains to exclude cocircular direction representatives. The normalized direction curve is
\[
    q(u) = \frac{V(u)}{\sqrt{2(u^4+8u^2+1)}} \in S^2.
\]
Suppose that the entire curve were contained in a circle on $S^2$, or equivalently in an affine plane. After clearing the square root and squaring the plane equation, we would obtain an identity $(A+Bu+Cu^2)^2 = 2\delta^2(u^4+8u^2+1)$ for constants $A,B,C,\delta$, not all zero. Comparing odd coefficients gives $AB=BC=0$. If $B\neq 0$, then $A=C=0$; comparison of the constant coefficient gives $\delta=0$, and the coefficient of $u^2$ then gives $B=0$, a contradiction. Hence $B=0$. The remaining coefficients then give $A^2=C^2=2\delta^2$ and $AC=8\delta^2$, which is impossible unless all four constants vanish. Thus the direction curve is contained in no circle.

Consequently, the condition that four direction representatives be cocircular is not identically satisfied. After eliminating the algebraic square roots in the normalization of $q(u)$, this condition is contained in a proper real algebraic subset of the corresponding four-parameter space.

We now make the simultaneous choice of site parameters. Fix $n$ and let $\eta_0>0$ be arbitrary. The four lower coefficients of
\[
    Q_{ij}(u) = (u-u_i)(u-u_{i+1})(u-u_j)(u-u_{j+1})
\]
converge uniformly to zero as
\[
    \max_k |u_k|\longrightarrow0.
\]
Since there are only finitely many admissible pairs $(i,j)$, we may choose $0<\eta<\eta_0$ so that, for every ordered tuple $-\eta<u_1<\cdots<u_n<\eta$ and every admissible pair $(i,j)$, the coefficient vector of $Q_{ij}$ lies in the inverse-function neighborhood $\mathcal U$. We choose $\mathcal U$ small enough that the corresponding local sphere satisfies $R>0$, $p_4>0$, and $\det D\Psi(c,R)\neq 0$.

Let $\Delta_\eta := \{ (u_1,\ldots,u_n)\in\mathbb R^n: -\eta<u_1<\cdots<u_n<\eta \}$. For fixed $n$, the pullbacks of the exceptional coefficient set $\mathcal B$ and the cocircularity conditions form a finite union $\mathcal Z$ of proper real algebraic subsets of the ambient parameter space. Therefore $\mathcal Z$ has empty interior and
cannot contain the nonempty open simplex $\Delta_\eta$. Choose $(u_1,\ldots,u_n) \in \Delta_\eta \setminus \mathcal Z$. Together with the pairwise-skew property and the direction determinant above, the facts that no real sphere is tangent to the entire ruling and that a nonzero contact quartic cannot vanish at five distinct site parameters show that this choice satisfies all the conditions in Assumption~\ref{ass:gp}.

For every admissible pair $(i,j)$, the four roots $u_i$, $u_{i+1}$, $u_j$, and $u_{j+1}$ are distinct. Equation~\eqref{eq:evaluation-jacobian-factorization} therefore shows that the sphere $S_{ij}$ is transverse. Moreover,
$Q_{ij}$ is negative precisely on the two consecutive-site gaps $(u_i,u_{i+1})$ and $(u_j,u_{j+1})$ and is strictly positive at every non-support site parameter. Since the unnormalized leading coefficient satisfies $p_4>0$, every non-support line misses the closed ball bounded by $S_{ij}$. Consequently, all the spheres $S_{ij}$ simultaneously determine regular nearest Voronoi vertices. This proves Lemma~\ref{lem:lower-bound-genericity}.


\section{Algorithmic and Real-RAM Details}\label{app:algorithmic-details}
\subsection{Recovering the ruling conic}\label{app:recovering-conic}

\begin{lemma}\label{lem:recover-ruling-conic}
Suppose $n \ge 3$ and the input Pl\"ucker points are promised to be distinct points of one ruling conic $\Gamma \subset K$. The first three points span the unique plane $\Pi_\Gamma$ containing $\Gamma$, and $\Gamma = \Pi_\Gamma \cap K$. A rational parametrization of $\Gamma$ and the parameter of every site can be recovered in $O(n)$ time in the real-RAM model.
\end{lemma}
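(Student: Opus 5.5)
The plan is to work entirely inside the plane $\Pi_\Gamma\subset\mathbb P^5$ and treat the problem as elementary projective geometry of a nonsingular conic. By Proposition~\ref{prop:ruling-conic}, $\Gamma$ is a nonsingular plane conic, so its span $\Pi_\Gamma$ is a projective plane and $\Gamma\subset\Pi_\Gamma\cap K$.

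\textbf{The span and $\Gamma=\Pi_\Gamma\cap K$.} Since no line meets a nonsingular conic in three distinct points, the first three input points $P_1,P_2,P_3$ are not collinear. They therefore span a projective plane contained in $\Pi_\Gamma$, and by dimension count this plane equals $\Pi_\Gamma$; this also gives uniqueness. It remains to show $\Pi_\Gamma\cap K=\Gamma$. The restriction of the Pl\"ucker quadric to $\Pi_\Gamma$ is a plane conic (possibly the whole plane) that contains the nonsingular conic $\Gamma$. Two plane conics sharing infinitely many points coincide unless one is identically zero, so we only need to rule out $\Pi_\Gamma\subset K$. Planes contained in the Klein quadric are exactly the $\alpha$-planes (lines through a fixed point) and the $\beta$-planes (lines in a fixed plane). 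Each of these families consists of pairwise intersecting lines. Lines of one ruling are pairwise disjoint, so $\Pi_\Gamma\not\subset K$ and $\Gamma=\Pi_\Gamma\cap K$.

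\textbf{Parametrization and site parameters.} We will Gram--Schmidt the vectors $P_1,P_2,P_3$, or simply use them as a basis, to write each point of $\Pi_\Gamma$ as $\sum_k y_kP_k$. Pulling back the Pl\"ucker bilinear form gives a $3\times3$ symmetric matrix $B$ with $B_{kk}=0$, since each $P_k\in K$. The equation of $\Gamma$ in these coordinates is then $y_1y_2B_{12}+y_1y_3B_{13}+y_2y_3B_{23}=0$, with all $B_{kl}\neq0$ by nonsingularity. Parametrizing by lines through the base point $P_1$ gives the explicit quadratic parametrization
\[
  [s:t]\mapsto \bigl[-B_{23}st : B_{13}s(s{+}t)\cdot\text{(suitable)} : \cdots\bigr].
\]
Concretely, substituting $y_2=s\,w$ and $y_3=t\,w$ and solving the linear equation for $y_1$ yields $y_1=-B_{23}st$, $y_2=s(B_{12}s+B_{13}t)$, $y_3=t(B_{12}s+B_{13}t)$. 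These are quadratics without a common factor, so the parametrization is reduced. All of this costs $O(1)$ operations.

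For each remaining site $P_i$, we will solve the $6\times3$ linear system $P_i=\sum y_kP_k$ in $O(1)$ time (for instance by projecting onto three independent coordinates) and read off $[s:t]=[y_2:y_3]$. If $y_2=y_3=0$, the point is the base point $P_1$. Over all sites this is $O(n)$ total.

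The main obstacle is the step $\Pi_\Gamma\not\subset K$, which requires invoking the classification of planes in the Klein quadric. The rest consists of routine constant-size linear algebra.
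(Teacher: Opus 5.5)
Your proposal is correct and follows the same route as the paper's proof: three points of a nonsingular conic are not collinear, you restrict the Klein quadric to $\Pi_\Gamma$, parametrize by the pencil of lines through a known point, and invert in constant time per site. It is in fact more careful than the paper in explicitly excluding $\Pi_\Gamma\subset K$, a step the paper passes over; this also follows at once from $B_{12}\neq0$, since the Pl\"ucker pairing of two skew lines is nonzero, so the classification of planes in $K$ is not needed. Just delete the placeholder display containing ``(suitable)'' and record the base point's parameter $[s:t]=[-B_{13}:B_{12}]$.
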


\begin{proof}
No three distinct points of a nonsingular conic are collinear, since a projective line meets a nonsingular conic in at most two points. Therefore the first three Pl\"ucker points span $\Pi_\Gamma$.

Restricting the quadratic equation of the Klein quadric $K$ to $\Pi_\Gamma$ gives a plane quadratic equation containing $\Gamma$. Since $\Gamma$ is a nonsingular plane conic, this restriction is precisely its defining equation.

A nonsingular conic with a known real point admits a rational parametrization obtained by intersecting the conic with the pencil of projective lines through the known point and taking the second intersection. This parametrization can be constructed in constant time in the real-RAM model. Inverting it for each input point requires only a constant number of arithmetic and bounded-degree algebraic operations per site. The total running time is therefore $O(n)$.
\end{proof}

After choosing a projective parameter not occupied by a site as the point at infinity, the site parameters can be sorted in $O(n\log n)$ time in the real-RAM model. Restoring the wrap-around gap gives their cyclic order on $\mathbb P^1(\mathbb R)$. The sorted parameter list requires $O(n)$ storage.

\subsection{Correctness of the enumeration}\label{app:enumeration-correctness-details}
We give the details omitted from the proof of Theorem~\ref{thm:enumeration-correctness}.

Let $x$ be a finite nearest Voronoi vertex and let $R$ be its squared distance from the four supporting lines. By Lemma~\ref{lem:cyclic-support}, the two negative arcs of $H_{x,R}$ are vertex-disjoint cyclic gaps. The algorithm examines this gap pair, and its four endpoint lines are precisely the supporting lines of $x$. Since these lines are equidistant from $x$, the point $x$ satisfies Equation~\eqref{eq:enumeration-system}. Assumption~\ref{ass:gp} implies that $x$ is an isolated transverse solution, and it passes the simplicity and nearest-sign tests.

For a finite farthest Voronoi vertex, the same argument applies with the two positive arcs supplied by Lemma~\ref{lem:cyclic-support}. Thus every finite nearest and farthest vertex is reported.

Conversely, suppose that an isolated candidate $x$ passes the nearest-sign test. The selected negative arcs contain no site parameters because they are cyclic gaps. Every non-support site parameter lies on one of the two positive arcs, and hence $H_{x,R}(\tau_k)>0$ for every non-support line. At each support parameter, the contact quartic vanishes. Therefore $H_{x,R}(\tau_k)\ge 0$ for every site parameter. By Equation~\eqref{eq:contact-sign}, $d(x,\ell_k)^2\ge R$ for every line. Hence the open ball $B(x,\sqrt R)$ meets no line, while its boundary is tangent to the four support lines. The simplicity and transversality tests imply that $x$ is a regular nearest Voronoi vertex.

If $x$ passes the farthest-sign test, then $H_{x,R}(\tau_k)\le 0$ and consequently $d(x,\ell_k)^2\le R$ for every site line. Thus every line meets $\overline B(x,\sqrt R)$, whose boundary is tangent to the four support lines. The candidate is therefore a regular farthest Voronoi vertex.

For a regular nearest vertex, the selected gaps are exactly the two negative arcs of its contact quartic; for a regular farthest vertex, they are exactly the two positive arcs. These gaps are uniquely determined by the contact quartic. Hence no vertex is reported more than once within either diagram. This proves Theorem~\ref{thm:enumeration-correctness}.

\subsection{Real-RAM running time and storage}\label{app:enumeration-complexity-details}
By Lemma~\ref{lem:gap-pair-count}, the number of unordered pairs of vertex-disjoint cyclic gaps is
\[
    \binom{n}{2}-n = \frac{n(n-3)}{2}.
\]
The algorithm forms one instance of Equation~\eqref{eq:enumeration-system} for each pair.

Each instance consists of three quadratic equations in three variables. It therefore has constant description complexity and, by Lemma~\ref{lem:eight-center-bound}, at most eight isolated real solutions. Under the real-RAM convention stated in Section~\ref{sec:vertex-enumeration}, all isolated real solutions of one instance can be obtained in $O(1)$ time.

For each candidate, the algorithm performs the following constant-size computations:
\begin{enumerate}
    \item evaluate the squared radius $R$,
    \item form the contact quartic $H_{x,R}$,
    \item test that $H_{x,R}$ is nonzero and that the four support parameters are simple roots,
    \item test the rank of the three bisector gradients,
    \item determine the sign of $H_{x,R}$ on its four complementary arcs.
\end{enumerate}

Each item takes $O(1)$ time in the real-RAM model. Since all quadratic systems together produce at most $8 \cdot {n(n-3)}/{2} = 4n(n-3)$ candidates, the total time spent on solving and certification is $O(n^2)$. The same candidate set is classified for both diagrams, so simultaneous enumeration does not require a second collection of quadratic systems.

The gap pairs can be generated by nested loops. If vertices are reported as soon as they are certified, the algorithm stores the sorted site parameters and only a constant number of current candidates. Its working storage is therefore $O(n)$, excluding the output. If all reported vertices are stored, the worst-case storage is $O(n^2)$.

\subsection{Real-RAM convention and bit complexity}\label{app:machine-model}
The real-RAM bound uses the standard constant-description-complexity convention common in computational geometry. In addition to unit-cost arithmetic operations and comparisons on real numbers, it treats the following operations as constant time whenever the numbers of variables and equations and their degrees are bounded by absolute constants:
\begin{enumerate}
    \item computing the isolated real solutions of a polynomial system,
    \item storing such solutions by an exact real-algebraic representation,
    \item determining the signs of bounded-degree polynomials at those solutions.
\end{enumerate}

All polynomial systems used in the enumeration satisfy these requirements. Consequently, each gap pair is processed in $O(1)$ time in the real-RAM model, giving the $O(n^2)$ bound in Theorem~\ref{thm:solver-call-complexity}.

The exact algebraic primitives can also be implemented in a bit model. For example, an isolated solution of Equation~\eqref{eq:enumeration-system} may be represented by a real univariate representation. Since the dimension and degrees are fixed, the degrees of these representations are bounded by an absolute constant. The required transversality and contact-quartic sign tests can then be performed by exact real-root isolation and sign determination.

Suppose that all input coefficients are rational and have bit length at most $B_{\mathrm{in}}$. The systems constructed for the gap pairs have constant dimension and degree, and their coefficients have bit length polynomial in $B_{\mathrm{in}}$. Standard elimination, real-root isolation, and sign determination therefore require $\operatorname{poly}(B_{\mathrm{in}})$ bit operations per gap pair. The resulting total bit complexity is $O\bigl(n^2\operatorname{poly}(B_{\mathrm{in}})\bigr)$, in addition to the cost of writing the exact output representations. The $O(n\log n)$ comparisons used to determine the cyclic order are absorbed by this bound for $n\ge 4$.

For real-algebraic input coordinates, an analogous bound holds with a polynomial dependence on the degrees and bit lengths of their exact representations. The $O(n^2)$ running-time analysis in the real-RAM model does not account for these bit-size dependencies.

\end{document}